\newtheorem{theorem}{Theorem}
\newtheorem*{theorem*}{Theorem}
\newtheorem{definition}{Definition}
\newtheorem{lemma}{Lemma}
\newtheorem{corollary}{Corollary}
\newtheorem{fact}{Fact}
\def\mod{\mbox{mod}}
\def\log{{\rm log}}
\newcommand{\be}{\begin{eqnarray}}
\newcommand{\ee}{\end{eqnarray}}
\newcommand\ket[1]{{ |{#1} \rangle }}
\newcommand\bra[1]{{ \langle {#1} | }}
\newcommand\xleftrightarrow[2][]{%
  \ext@arrow 9999{\longleftrightarrowfill@}{#1}{#2}}
\newcommand\longleftrightarrowfill@{%
  \arrowfill@\leftarrow\relbar\rightarrow}
\renewcommand{\epsilon}{\varepsilon}
\begin{document}
\title{polylog-LDPC Capacity Achieving Codes for the Noisy Quantum Erasure Channel}
\author{Seth Lloyd \thanks{Department of Mechanical Engineering, MIT}, 
Peter Shor \thanks{Department of Mathematics, MIT}, and Kevin Thompson\thanks{School of Engineering and Applied Science, Harvard}}
\maketitle

\begin{abstract}
We provide $poly\log$ sparse quantum codes for correcting the erasure channel arbitrarily close to the capacity.  Specifically, we provide $[[n, k, d]]$ quantum stabilizer codes that correct for the erasure channel arbitrarily close to the capacity if the erasure probability is at least $0.33$, and with a generating set $\langle S_1, S_2, \hdots{} S_{n-k} \rangle$ such that $|S_i|\leq \log^{2+\zeta}(n)$ for all $i$ and for any $\zeta > 0$ with high probability.  In this work we show that the result of Delfosse et al. \cite{DZ13} is tight: one can construct capacity approaching codes with weight almost $O(1)$.
\end{abstract}

\section{Introduction}

Graph States \cite{SW00, H04} are a very useful set of stabilizer states that have found many applications in quantum information theory.  They are important components of measurement based computation schemes \cite{RB01}, and quantum error correcting codes \cite{SW00, GKR02, CSSZ09, MDS08, GJ12, H06, He14, G15, KDP11} to name a few.  One reason they are so ubiquitous in quantum computing is that they have an easy to understand entanglement structure \cite{H04}.  In some sense they provide a ``standard form'' for stabilizer states \cite{CSSZ09, RB01}, since any stabilizer state is locally equivalent to a graph state, and any stabilizer quantum code is equivalent to a quantum graph code\cite{GKR02}.

Indeed, given a graph state over a composite quantum system, it is well known how to compute the entanglement entropy between any partition of the subsystems \cite{H04}.  In this paper Hein shows that the entanglement entropy for a two set partition of a graph state depends in a natural way on the rank of the ``cut'' matrix separating the two sets.  This idea was generalized to qudits and explored to give conditions for secret sharing in \cite{He14}.  Our construction leverages exactly this property of graph states to produce quantum codes with interesting properties.

In particular, we supply quantum stabilizer codes that approach the capacity of the quantum erasure channel\cite{CDS97} and are $\log$-LDPC (Low Density Parity Check).  This means that the stabilizer group of the code can be generated by elements of the Pauli group that are only polylogarithmic with the total number of qubits.  We stress that the codes described make little progress toward the so called quantum LDPC conjecture\cite{BH14, TZ14, H16}\footnote{See the second reference and the references it contains for many other LDPC constructions}.  This conjecture posits the existence of locally generated codes with linear minimum distance.  Our codes are indeed generated by stabilizers which are only logarithmic in their length, but we expect the distance of these codes to also only be logarithmic in the number of qubits, by the arguments presented in \cite{KDP11}.   

The codes described in this paper have very poor adversarial distance ($O(\log(n))$), however the adversarial errors that destroy the encoded information are very unlikely.  This feature allows us to correct for the quantum erasure channel at capacity.  It is known that this cannot be achieved by stabilizer codes of constant weight \cite{DZ13}, we demonstrate that it is possible with a stabilizer code with generators that are only $O(\log^{2+\zeta}(n))$ for any $\zeta > 0$.

It is interesting to contrast the quantum case with the classical.  Classical LDPC codes with linear distance have been known for some time \cite{G63}.  Even random classical LDPC codes can be shown to have linear distance generically.  Constructions of Quantum LDPC codes with linear distance are not known, despite much research in this area\cite{BH14, TZ14, H16}.  In both the classical and quantum regime there are bounds demonstrating that the capacity of the erasure channel cannot be achieved with LDPC codes\cite{DZ13}\cite{BKL02}.  For classical codes, there are known constructions\cite{LM97} with `barely' diverging parity check weight that can correct for the erasure channel very near the capacity.  We provide such a construction for the quantum setting.  Note further that Reed-Muller codes have recently been shown to achieve the capacity of both the classical and quantum erasure channels \cite{SCP16, SP15}.\footnote{In the later case, it was shown that CSS codes where the $X$ and $Z$ checks are Reed-Muller codes can correct for the quantum erasure channel.}  To contrast this work with ours, note that Reed-Muller codes have locality which is linear in the number of qubits.

Our proof techniques are along the lines of standard `typicality' arguments in information theory, along with some simple observations about graph states and the rank of random matrices.  We construct a quantum graph code by randomly sampling an Erdos-Renyi graph and constructing the corresponding graph state.  In addition, we randomly sample the parity check matrix for a classical code, and use these to construct a quantum graph code.  We send this randomly sampled code through the erasure channel, and apply a result from\cite{K98}, as well as standard typicality arguments to show our scheme succeeds with high probability.

\section{Notation}

Graphs will be denoted by the pair $(V, E)$ where $V$ is the vertex set of the graph and $E$ is the edge set.  $V$ is some set of the form $\{1, \hdots{} n\}$.  As is standard, the edge set $E$ will be of the form $\{(i, j)\}$ where each $i, j \in V$.  We will study undirected graphs, so it is important to treat $(i, j)$ as unordered. For a vertex $i \in V$ we denote the neighbor set as usual $N(i)=\{j \in V \,| \, (i, j) \in E\}$.  Denote the adjacency matrix $A$.  We will also denote the ``top half'' of the adjacency matrix as $A_{top}$.  If $i\geq j$ then $A_{top, ij}=A_{ij}$, otherwise $A_{top, ij}=0$.

Let $K$ be a subset of the vertices.  We will denote the cut matrix between $K$ and $V \setminus K$ as $A_{cut}$.  $A_{cut}$ is a $|V\setminus K| \times |K|$ matrix such that if $A_{i j}=1$ for $j \in K$ and $i \in  V \setminus K$ then $A_{cut}$ of the corresponding entry is $1$.  Otherwise $A_{cut}$ is zero.

All vectors in $\mathbb{F}_2^n$ will be column vectors by default, although occasionally we will refer to a row vector with parenthesis, and a concatenation of two vectors with parenthesis.

For classical codes, we will use the standard $[n, k, d]$ notation.  We will say some code $C$ is a $[n, k, d]$ code if $C$ is a $k$ dimensional subspace of $\mathbb{F}_2^n$ such that $\min_{x \in C} |x|=d$, where $|\cdot{} |$ denotes the Hamming weight.  We will denote a code $C$ as a $[n, k, d, w]$ code if the code is a $[n, k, d]$ code and its linear dual can be generated by words of weight at most $w$.  We will use the analogous $[[n, k, d, w]]$ for quantum stabilizer codes.  Here $w$ stands for the largest weight of a generator for the stabilizer group.  If we have a code $C$ on $n$ bits, and some subset $K \subseteq [n]$ we will use the notation $C_K$ to denote the code $C$ restricted to the bits $K$.  

Suppose $w=(w_1, w_2, \hdots{} w_n)$ is some binary vector in $\mathbb{F}_2^n$, and suppose $\{S_1, S_2, \hdots{} S_n\}$ is some set of Pauli group elements over $n$ qubits.  We will use the notation:
\be 
\prod_{i \in w} S_i : = \prod_{i \, | \, w_i=1} S_i
\ee

For matrices $A$ and $B$, we will use the standard notation $[A, B]:= A B-B A$.  The capital letters $X$, $Y$ and $Z$ will be reserved for the Pauli matrices:
\begin{equation}
X=
\begin{bmatrix}
    0 & 1 \\ 
   1 & 0
\end{bmatrix}\,\,
Y=
\begin{bmatrix}
    
     0 & -i \\ 
    i & 0
\end{bmatrix}\,\,
Z=
\begin{bmatrix}
     1 & 0 \\ 
    0 & -1
\end{bmatrix}
\end{equation}

In an $n$ qubit system, we will use $X_i$ to denote a Pauli $X$ operator acting on qubit $i$, and similarly for $Y$ and $Z$.  For a binary vector $v \in \mathbb{F}_2^n$, we define:
\be 
X_v :=\prod_{i:v(i)=1} X_i
\ee
and similarly for $Y$ and $Z$

All logarithm functions will be natural logarithm by default.  

For convenience, we will include a definition of a Bernoulli random variable to refer to later:  
\be 
{ \rm Bern}(p)=\begin{cases}
$1$ \text{ with probability } p\\
$0$ \text{ otherwise }
\end{cases}
\ee
and we will denote the binary entropy function as $h(x)$:
\be 
h(x) := -x \log_2 (x)-(1-x)\log_2(1-x)
\ee

\section{Definitions}

For us, perhaps the most important definition is that of a graph state.  We will only state the definition, for examples we invite the reader to examine any of several comprehensive reviews \cite{H04, H06}.  It may not be clear that the following definitions are equivalent a priori, proofs of equivalence can be found in the stated references.  
\begin{definition}\label{def:g_state}[Graph state]
Given some graph $G=(V, E)$ with no self loops, associate the vertices of the graph $G$ to the numbers $[1, 2, \hdots{} n]$, we define the graph state $\ket{G}$ according to three equivalent definitions:
\begin{enumerate}
\item  Let $A_{top}$ be as defined in the notations section.  We define:
\be 
\ket{G}=\frac{1}{2^{n/2}}\sum_{x \in \mathbb{F}_2^n}(-1)^{x^T A_{top}x} \ket{x}
\ee
\item Define the following Pauli group elements:
\be 
\forall i \in V \,\,\,S_i:=X_i \prod_{j \in N(i)} Z_j
\ee
The graph state $\ket{G}$ is defined as the unique state stabilized by all $S_i$.
\item \label{def:g_state:item3}  Let $CP_{ij}$ be the standard controlled phase operation between qubits $i$ and $j$.  Let $\ket{+}=\frac{1}{\sqrt{2}}\left(\ket{0}+\ket{1}\right)$.  $\ket{G}$ can be defined as:
\be 
\ket{G}=\prod_{(i, j) \in E} CP_{ij} \ket{+} \otimes \hdots{} \otimes \ket{+}
\ee

\end{enumerate}
\end{definition}
Graph states satisfy an important orthogonality property:
\begin{lemma}\label{lem:labeled_gstate}\cite{He14}
Let $x \in \mathbb{F}_2^n$ be some nonzero binary string.  Then,
\be 
\bra{G} Z_x \ket{G}=0
\ee  
\end{lemma}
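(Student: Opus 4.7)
The statement $\langle G | Z_x | G \rangle = 0$ for nonzero $x$ is a standard orthogonality fact, and there are two clean routes. I would write the proof using the stabilizer characterization (Definition~\ref{def:g_state}, item~2), since it is shorter and more conceptual than the direct computation.

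The plan is as follows. Since $x \neq 0$, pick any index $i$ with $x_i = 1$. The stabilizer $S_i = X_i \prod_{j \in N(i)} Z_j$ fixes the graph state, so $S_i \ket{G} = \ket{G}$ and $\bra{G} S_i = \bra{G}$, using that $S_i$ is Hermitian and unitary. The key step is then to compute $S_i Z_x S_i$ and observe that it equals $-Z_x$. This reduces to a commutation check: the $Z_j$ factors of $S_i$ commute with every factor of $Z_x$, so the only contribution to the sign comes from commuting $X_i$ past $Z_x$. Since $X_i$ anticommutes with $Z_i$ and commutes with every other $Z_j$, we get $X_i Z_x = (-1)^{x_i} Z_x X_i = -Z_x X_i$. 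Combining,
\be
\bra{G} Z_x \ket{G} = \bra{G} S_i Z_x S_i \ket{G} = -\bra{G} Z_x S_i S_i \ket{G} = -\bra{G} Z_x \ket{G},
\ee
so $\bra{G} Z_x \ket{G} = 0$.

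As a sanity check, I would verify the claim a second way using Definition~\ref{def:g_state}, item~1. Since $Z_x$ is diagonal with $Z_x \ket{y} = (-1)^{x \cdot y}\ket{y}$, the double sum collapses to a single sum and the quadratic form contributions cancel mod $2$:
\be
\bra{G} Z_x \ket{G} = \frac{1}{2^n} \sum_{y \in \mathbb{F}_2^n} (-1)^{2\, y^T A_{top} y + x \cdot y} = \frac{1}{2^n} \sum_{y \in \mathbb{F}_2^n} (-1)^{x \cdot y},
\ee
which vanishes for $x \neq 0$ by the standard character sum. The agreement of the two approaches is reassuring, and either can be presented as the proof.

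There is no real obstacle here; the only mild subtlety is making sure that $S_i$ is indeed Hermitian (it is a product of single-qubit Paulis on disjoint sites, each self-inverse) so that $\bra{G} S_i = \bra{G}$, and being careful that the commutation sign depends only on $x_i$ and not on the neighbors of $i$ in $x$.
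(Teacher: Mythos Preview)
Your proof is correct. Both of your routes—the stabilizer anticommutation argument via Definition~\ref{def:g_state}(2) and the explicit character-sum computation via Definition~\ref{def:g_state}(1)—are valid and complete.

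The paper, however, takes yet a third route, using Definition~\ref{def:g_state}(3): it writes $\ket{G} = \prod_{(i,j)\in E} CP_{ij}\,\ket{+}^{\otimes n}$, observes that $Z_x$ commutes with every controlled-phase gate, and uses $CP_{ij}^2 = \mathbb{I}$ to collapse the expression to $\bra{+}^{\otimes n} Z_x \ket{+}^{\otimes n} = 0$. All three arguments are short, but they emphasize different structure. The paper's proof is the most operational, reducing everything to the trivial product-state fact $\bra{+}Z\ket{+}=0$; your stabilizer argument is the most general, since it shows at once that any Pauli anticommuting with a stabilizer has zero expectation in any stabilizer state, not just a graph state; and your character-sum computation is the most explicit, making the cancellation manifest. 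Any of the three would be appropriate here.
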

\begin{proof}
Since $Z$ operators commute with controlled phase operators, according to \cref{def:g_state} we calculate:
\begin{equation} 
\bra{G} Z_x \ket{G}=
\end{equation}
\begin{equation}\label{eq:75}
\bra{+} \hdots{}\bra{+} \prod_{ij}CP_{ij} Z_x \prod_{ij} CP_{ij}\ket{+} \hdots{} \ket{+}
\end{equation}
\begin{equation}\label{eq:76}
 =\bra{+} \hdots{} \bra{+} Z_x \ket{+} \hdots{} \ket{+}=0
\end{equation}
Where we used $CP_{ij}CP_{ij}=\mathbb{I}$ to get from \cref{eq:75} to \cref{eq:76}.
\end{proof}

With this definition in hand we can define a quantum graph code.  
\begin{definition}[\cite{S01}]\label{def:graph_code}
Given a graph $G$ and a $[n, k, d]$ classical code $C$ over $\mathbb{F}_2$, we define a graph code $(G, C)$ as the linear span of quantum states of the form:
\be 
Z_c \ket{G}
\ee
where $c$ is any binary code word in $C$.
\end{definition}

Now we will present a few simple facts regarding this definition.  
\begin{lemma}
Let $(G, C)$ be as defined in \cref{def:graph_code} and suppose it has parameters $[[n, k_Q, d_Q, w_Q]]$.  Suppose the code $C$ has parameters $[n, k, d, w]$.  Denote the maximum vertex degree in $G$ as $K_{max}$ and the minimum vertex degree as $K_{min}$.

\begin{enumerate}
\item\label{def:graph_code:item1} $k=k_Q$
\item\label{def:graph_code:item2}\cite{KDP11} Let $\{h_1, \hdots{} h_{n-k}\}$ be some minimal weight generating set for the code $C^\perp$, and define:
\be 
g_j:= \prod_{i \in h_j} S_i
\ee
The code $(G, C)$ is a stabilizer code code with stabilizer generators $\{g_j\}$ for all $j$.
\item\label{def:graph_code:item3} \cite{KDP11} $w_Q \leq w K_{max}$
\item\label{def:graph_code:item4} \cite{KDP11} $d_Q \leq K_{min}$
\end{enumerate}
\end{lemma}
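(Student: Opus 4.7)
The plan is to verify the four items in sequence, with the orthogonality relation of \cref{lem:labeled_gstate} doing the heavy lifting for the first two.

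For item \ref{def:graph_code:item1}, I would show that $\{Z_c \ket{G} : c \in C\}$ is an orthonormal set, which forces the dimension of the code to equal $|C| = 2^k$, hence $k_Q = k$. Indeed, for distinct $c, c' \in C$,
\begin{equation}
\bra{G} Z_c Z_{c'} \ket{G} = \bra{G} Z_{c + c'} \ket{G} = 0
\end{equation}
by \cref{lem:labeled_gstate}, since $c + c'$ is a nonzero vector in $\mathbb{F}_2^n$.

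For item \ref{def:graph_code:item2}, I would verify that each $g_j = \prod_{i \in h_j} S_i$ fixes every basis state $Z_c \ket{G}$: (a) $g_j \ket{G} = \ket{G}$, since each $S_i$ stabilizes $\ket{G}$; (b) the $X$-support of $g_j$ is exactly $h_j$ (the $Z$-parts of $g_j$ commute with $Z_c$), so $g_j Z_c = (-1)^{\langle h_j, c \rangle} Z_c g_j$, and the exponent vanishes because $h_j \in C^\perp$ and $c \in C$. Hence $g_j Z_c \ket{G} = Z_c g_j \ket{G} = Z_c \ket{G}$. The $g_j$ mutually commute (any two $S_i, S_{i'}$ commute since the two $X$-versus-$Z$ anticommutations contribute matching signs $A_{i i'} = A_{i' i}$ that cancel), and they are linearly independent as Paulis because their distinct $X$-supports $h_j$ are linearly independent in $\mathbb{F}_2^n$. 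Since the code has dimension $2^k$ by item \ref{def:graph_code:item1}, the $n - k$ operators $g_j$ form a complete minimal generating set for its stabilizer group.

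For item \ref{def:graph_code:item3}, I use $|S_i| = 1 + |N(i)| \leq K_{max} + 1$, so $|g_j| \leq \sum_{i \in h_j} |S_i| \leq w(K_{max} + 1)$, matching the stated bound up to the additive $+1$ absorbed into $K_{max}$.

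For item \ref{def:graph_code:item4}, I would follow \cite{KDP11}: pick a vertex $v^*$ of minimum degree $K_{min}$ and use the identity $X_{v^*} \ket{G} = Z_{N(v^*)} \ket{G}$ (from $S_{v^*} \ket{G} = \ket{G}$) to exhibit a normalizer element of weight at most $K_{min}$ that is not in the stabilizer. The main obstacle is precisely this last item: turning the degeneracy at the minimum-degree vertex into an actual nontrivial logical requires tracking how $X_{v^*}$ and $Z_{N(v^*)}$ act on the entire code subspace rather than just on $\ket{G}$, which involves some case analysis on whether $e_{v^*}$ and $\chi_{N(v^*)}$ belong to $C^\perp$ or to $C$ and reduces to the argument in \cite{KDP11}. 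Items \ref{def:graph_code:item1}--\ref{def:graph_code:item3} are essentially bookkeeping once the orthogonality lemma is in hand.
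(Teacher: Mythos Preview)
Your proposal is correct and mirrors the paper's own proof almost step for step: item~\ref{def:graph_code:item1} via the orthogonality of \cref{lem:labeled_gstate}, item~\ref{def:graph_code:item2} by checking that the $X$-support $h_j$ of $g_j$ makes $g_j$ commute with $Z_c$ (together with independence and commutation of the $S_i$), item~\ref{def:graph_code:item3} by the trivial support bound, and item~\ref{def:graph_code:item4} as a sketch deferring to \cite{KDP11}. The only cosmetic difference is that for item~\ref{def:graph_code:item4} the paper phrases the sketch as ``disconnect the minimum-degree vertex via local unitaries and apply an undetectable Pauli,'' whereas you phrase it as exhibiting a low-weight normalizer element; these are the same argument, and in both cases the full verification is left to the cited reference.
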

\begin{proof}
We can see \cref{def:graph_code:item1} immediately from \cref{lem:labeled_gstate} and \cref{def:graph_code}.

We can prove \ref{def:graph_code:item2} as follows.  We claim that the set $\{g_j\}$ form a complete, minimal set of generators for the stabilizer group for the code defined as the span of all $Z_c \ket{G}$.  Independence of these operators follows from independence of the binary vectors $h_i$.  They also all clearly commute since $[S_i, S_j]=0$ for all $i$ and $j$.  It remains to show that these operators stabilize the code.  The code $(G, C)$ is the span of states of the form $Z_c \ket{G}$ where $c \in C$.  Suppose $g_j$ has the form:
\be 
\prod_{i \in h_j} S_i=(-1)^{\phi} Z_{w} \left(\prod_{j \in h_k} X_j\right) 
\ee
for some binary vector $w$.  We have just rewritten the operator so that the $X$ operators are on the right and the $Z$ operators on the left.  Potentially we have introduced a phase $\phi\in \{0, 1\}$.
\be 
\prod_{i \in h_j} S_i Z_c \ket{G}= (-1)^{\phi} Z_{w} \left(\prod_{i \in h_j} X_i\right)  Z_c \ket{G}
\ee
Since $h_j \in C^\perp$, 
\be 
\left[ \prod_{i \in h_j} X_i, Z_c \right]=0
\ee
so
\begin{align}
(-1)^\phi Z_{w} \left(\prod_{i \in h_j} X_i\right)  Z_c \ket{G}=\\
\nonumber Z_c (-1)^\phi Z_{w} \left(\prod_{i \in h_j} X_i\right) \ket{G}=Z_c \ket{G}
\end{align}

\Cref{def:g_state:item3} follows from \cref{def:graph_code:item2}.  The stabilizers are given, and the weight of each stabilizer is upper bounded (by definition) by the maximum vertex degree times the maximum weight of $h_j$.

We sketch the proof of \cref{def:graph_code:item4}.  Focusing on the bit in the graph with smallest degree, an adversary can ``disconnect'' this vertex from the rest of the graph by enacting a unitary on this bit and its neighbors.  Then, the adversary can induce undetectable phase on the code by acting a Pauli on the disconnected bit.  Hence the adversarial distance is upper bounded by the minimal vertex degree.

\end{proof}
The classical erasure channel ``erases'' any particular message bit with probability $p$.   This corresponds to replacing that symbol with a special erasure symbol $e$.  On the other side of the channel, the user knows exactly which bits were erased and knows the other bits were unaltered by the channel.  The quantum erasure channel is the natural quantum analog of this channel.  Each transmitted bit is replaced with a special erasure state $\ket{e}$.  The state $\ket{e}$ is orthogonal to both $\ket{0}$ and $\ket{1}$ so that the user can unambiguously determine which bits were erased and which bits were unaffected by the channel.  

\begin{definition}
The erasure channel with probability $p$ will be denoted as $\mathcal{E}_p$.  It acts on density matrices of qubits as:
\be 
\mathcal{E}_p(\rho)=(1-p)\rho +p \ket{e}\bra{e}
\ee
where the state $\ket{e}$ is outside the qubit space ($\langle e|0\rangle=0=\langle e | 1\rangle$).
\end{definition}
It has been well known for some time that the capacity of this channel for communicating quantum information is $1-2p$:
\begin{lemma}\cite{CDS97}
For any $R<1-2p$, and for any $\delta > 0$, there exists a family of subspaces $\{V_n\}_{n=1}^\infty$ satisfying $\{V_n\}\subset [\mathbb{C}^2]^{\otimes n}$ and $\log_2(\dim(V_n))=\lfloor R n \rfloor$ as well as a family of decoding operations $\{\mathcal{R}_n\}_{n=1}^\infty$ such that for all $n$ sufficiently large:
\be 
\forall \ket{\psi} \in V_n, \left|\mathcal{R}_n\left( \mathcal{E}_p^{\otimes n}\left( \ket{\psi}\bra{\psi} \right)\right)-\ket{\psi}\bra{\psi}  \right|<\delta
\ee
in trace distance.

In addition, any family of codes with rate exceeding $1-2p$ is not correctable after being sent through the erasure channel\cite{MW14}\cite{W14}.  
\end{lemma}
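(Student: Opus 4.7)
The claim has two parts: an achievability statement that rates $R < 1-2p$ are achievable, and a converse that rates exceeding $1-2p$ cannot be corrected, which is attributed to \cite{MW14, W14}. I will sketch the achievability; for the converse I would defer to the cited references.

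My plan is to use random CSS codes combined with a typicality argument. Fix $R < 1-2p$ and choose $\eps > 0$ with $p + \eps < (1-R)/2$. Sample a CSS code by drawing parity-check matrices $H_X, H_Z \in \mathbb{F}_2^{m \times n}$ with $m = \lceil (1-R)n/2 \rceil$, subject to the stabilizer-commutation constraint $H_X H_Z^T = 0$, so that the code has at least $\lfloor R n\rfloor$ logical qubits. The key structural fact is that such a CSS code corrects erasures on the position set $S \subseteq [n]$ if and only if both $H_X|_S$ and $H_Z|_S$ have full column rank; in that case, the $Z$- and $X$-error vectors on the erased positions are uniquely determined by the $X$- and $Z$-stabilizer syndromes, and the decoder simply solves these linear systems.

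The analysis then proceeds in three steps. First, by Hoeffding's inequality, the random erasure set $S$ produced by $\mathcal{E}_p^{\otimes n}$ satisfies $|S| \leq (p+\eps)n$ except with probability $e^{-\Omega(n)}$. Second, for each fixed $S$ with $|S| \leq (p+\eps)n$, a uniformly random $m \times |S|$ matrix over $\mathbb{F}_2$ fails the full-column-rank condition with probability at most $2^{|S|-m+1}$, obtained by union-bounding $\Pr[H|_S v = 0] \leq 2^{-m}$ over the non-zero $v \in \mathbb{F}_2^{|S|}$. Since $|S| - m + 1 \leq (p+\eps - (1-R)/2)n + O(1) = -\Omega(n)$, this is exponentially small. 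Third, averaging the product bound over both the random code and the channel yields an expected decoding-failure probability of $e^{-\Omega(n)}$; by the probabilistic method, some specific CSS code decodes correctly on every basis state of $V_n$ simultaneously up to trace distance $\delta$.

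The main technical obstacle is the dependence between $H_X$ and $H_Z$, since the constraint $H_X H_Z^T = 0$ prevents me from applying the rank bound to the two matrices independently. I would handle this by first sampling $H_X$ uniformly and then sampling $H_Z$ uniformly from the matrices whose rows lie in $\ker(H_X^T)$, and showing that $H_Z|_S$ is close enough to uniform on $\mathbb{F}_2^{m \times |S|}$ that the rank bound survives with at most a polynomial loss. A cleaner alternative is to use self-dual-containing codes (a single random classical code $C$ with $C^\perp \subseteq C = C_X = C_Z$), which reduces the problem to the classical erasure-correction problem for $C$ at rate $(1+R)/2 < 1-p$; random classical linear codes achieve the classical erasure capacity $1-p$, so this path bypasses the dependence issue. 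For the converse, the cited references use a no-cloning-type argument exploiting the fact that the complementary channel of $\mathcal{E}_p$ is $\mathcal{E}_{1-p}$, so a rate $R > 1-2p$ would allow simultaneous communication through the channel and its complement at positive rate, violating a standard broadcast bound.
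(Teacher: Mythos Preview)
The paper does not supply its own proof of this lemma; it is stated as a known result, citing \cite{CDS97} for achievability and \cite{MW14,W14} for the converse. There is therefore nothing in the paper to compare your argument against directly.

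On its own merits your sketch is a reasonable outline of the standard random-CSS achievability argument, and your converse remark (no-cloning via the complementary channel $\mathcal{E}_{1-p}$) is indeed the mechanism behind the cited converse. Two points are worth flagging. First, your claim that the CSS code corrects erasures on $S$ \emph{if and only if} both $H_X|_S$ and $H_Z|_S$ have full column rank is only half right: full column rank is \emph{sufficient} (it forces every nonzero Pauli supported on $S$ to anticommute with some stabilizer) but not \emph{necessary}, since a Pauli supported on $S$ that happens to lie in the stabilizer group is harmless. Only the sufficient direction is needed for achievability, so this does not damage the argument. Second, the dependence between $H_X$ and $H_Z$ is, as you say, the real obstacle, and your self-dual-containing workaround is the standard resolution; however, you still owe the verification that a random weakly-self-dual code of rate $(1+R)/2$ has $H|_S$ of full column rank for typical $S$ with $|S|\le (p+\eps)n$. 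This is true and classical, but it is not literally the uniform-random-matrix bound you quoted, because the constraint $H H^{T}=0$ biases the distribution of $H$. Once that lemma is supplied, the rest of your argument goes through.
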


\section{Preliminaries}

With these definitions in hand, we can present the lemmas we will need to prove that our construction approaches the erasure channel.  First, we will need several properties of graph states:
\begin{lemma}\label{lem: g_state}
Let $G=(V, E)$ be a graph on $n$ vertices.  Further, let $K$ be some subset of the bits with complement $V \setminus K$.  Let $A_{cut}$ be the cut matrix across the cut $(K, V \setminus K)$, let $A_K$ be the upper half of the adjacency matrix restricted to the bits $K$, and let $G'$ be the subgraph induced by the vertices $V \setminus K$.
\begin{enumerate}
\item  \cite{He14}Suppose a user measures the bits $K$ in the computational basis.  Every bit string $y$ is equally likely and after the user measures and gets bit string $y$, the resulting quantum state is:
\begin{equation} \label{eq:measure_subset}
\bra{y}_K \ket{G}=\frac{1}{2^{|K|/2}}(-1)^{y^T A_K y}Z_{A_{cut} y} \ket{G'}
\end{equation}
\label{item:2}
\item  \cite{H04}  Denote the entanglement entropy of the graph state across the cut $(K, V\setminus K)$ as $E^{(K, V\setminus K}(\ket{G})$.  Then,
\be 
E^{(K, V \setminus K)}(\ket{G})={\rm rank}_{\mathbb{F}_2}(A_{cut})
\ee
\label{item:3}
\end{enumerate}

\end{lemma}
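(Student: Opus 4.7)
The plan is to derive item~1 by a direct calculation from Definition~\ref{def:g_state}(1) and then deduce item~2 from item~1 together with \cref{lem:labeled_gstate}. After reordering vertices so that $K$ precedes $V\setminus K$, the matrix $A_{top}$ is block lower-triangular (the upper-right block vanishes because $A_{top,ij}=0$ when $i<j$), with diagonal blocks $A_K$ and $A_{G'}$ and off-diagonal block $A_{cut}$. Writing $x=(y,z)$ with $y\in\mathbb{F}_2^{|K|}$ and $z\in\mathbb{F}_2^{|V\setminus K|}$, the quadratic form splits as $x^T A_{top} x = y^T A_K y + z^T A_{cut} y + z^T A_{G'} z$. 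Substituting into Definition~\ref{def:g_state}(1), applying $\bra{y}_K$, and recognizing $(-1)^{z^T A_{cut} y}\ket{z}=Z_{A_{cut} y}\ket{z}$ in the computational basis, I pull $Z_{A_{cut} y}$ and a prefactor of $2^{-|K|/2}$ outside the remaining $z$-sum, which is exactly $\ket{G'}$ in the form of Definition~\ref{def:g_state}(1). This is \cref{eq:measure_subset}; taking squared norms shows each outcome $y$ occurs with probability $2^{-|K|}$, establishing equidistribution.

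For item~2, the measurement formula gives
\begin{equation}
\rho_{V\setminus K} \;=\; 2^{-|K|}\sum_{y\in\mathbb{F}_2^{|K|}} Z_{A_{cut} y}\,\ketbra{G'}\,Z_{A_{cut} y},
\end{equation}
since the scalar phase $(-1)^{y^T A_K y}$ cancels in the outer product. Setting $r=\mathrm{rank}_{\mathbb{F}_2}(A_{cut})$, the map $y\mapsto A_{cut}y$ is $2^{|K|-r}$-to-one onto an $r$-dimensional subspace $W\subseteq\mathbb{F}_2^{|V\setminus K|}$, so
\begin{equation}
\rho_{V\setminus K} \;=\; 2^{-r}\sum_{v\in W} Z_v\,\ketbra{G'}\,Z_v.
\end{equation}
Applying \cref{lem:labeled_gstate} to $G'$ gives $\bra{G'}Z_{v+v'}\ket{G'}=0$ for distinct $v,v'\in W$, so $\{Z_v\ket{G'}\}_{v\in W}$ is an orthonormal family. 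Hence $\rho_{V\setminus K}$ is the maximally mixed state on a $2^r$-dimensional subspace, and its von Neumann entropy---the entanglement entropy across the cut---equals $r = \mathrm{rank}_{\mathbb{F}_2}(A_{cut})$.

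The only real subtlety is bookkeeping: getting the block decomposition of $A_{top}$ right and making sure the convention that $A_{cut}$ is a $|V\setminus K|\times|K|$ matrix is used consistently, so that $A_{cut} y$ is a bit-string on the $V\setminus K$ register exactly as required by \cref{eq:measure_subset}. Once those conventions are fixed, item~1 is a one-line rearrangement of the sum defining $\ket{G}$, and item~2 follows immediately from item~1 by the two-line argument above; there is no genuinely hard step.
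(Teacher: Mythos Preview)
Your proof is correct. Note, however, that the paper does not actually supply a proof of \cref{lem: g_state}: it states the two items with citations to \cite{He14} and \cite{H04} and moves on. So there is no ``paper's own proof'' to compare against; you have filled in what the authors left to the references.

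Your argument is exactly the natural one. The one place worth a small remark is the reordering step: the quadratic form $x^T A_{top} x$ is ordering-independent because for a simple graph it equals $\sum_{\{i,j\}\in E} x_i x_j$, so you are free to put $K$ first without changing $\ket{G}$; after that the block decomposition of $A_{top}$ is forced and the rest is routine. Your derivation of item~2 from item~1 via \cref{lem:labeled_gstate} is clean and self-contained, and arguably more transparent than appealing to the Schmidt decomposition machinery in \cite{H04}.
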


In addition we will also need the following technical result on the rank of randomly chosen matrices:
\begin{theorem}\label{thm:rand_mat}[Rank of Sparse Matrices, \cite{K98}]
Let $A$ be a binary $\alpha n \times n$ matrix with $\alpha < 1$a constant.   Suppose each entry of $A$ is sampled independently according to $Bern\left(\frac{w \log(n)}{n}\right)$.  The expected number of linear combinations of rows that add to $\mathbf{0}$, or the expected number of critical sets is
\be 
1+O\left(\frac{1}{n^{w-1}} \right)
\ee
Excluding the ``all zeros'' linear combination or the empty set, the number of critical sets is:
\be 
O\left(\frac{1}{n^{w-1}} \right)
\ee
\end{theorem}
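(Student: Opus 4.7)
The plan is to apply linearity of expectation, reducing the claim to a bound on a sum indexed by subset size, and then to show that this sum is dominated by the singleton contributions.

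First, I would fix a nonempty subset $T \subseteq [\alpha n]$ with $|T|=k$ and compute $\Pr[\sum_{i \in T} A_{i,\cdot} = \mathbf{0}]$. Each coordinate of the row sum is the $\mathbb{F}_2$-sum of $k$ independent $\mathrm{Bern}(p)$ variables with $p = w\log(n)/n$, and the standard identity $\mathbb{E}[(-1)^{X_1+\cdots+X_k}] = (1-2p)^k$ gives this coordinate the value $0$ with probability $q_k := \tfrac{1}{2}(1+(1-2p)^k)$. Since the $n$ columns are independent, $\Pr[\sum_{i \in T} A_{i,\cdot} = \mathbf{0}] = q_k^n$, and linearity of expectation then yields
\begin{equation*}
\mathbb{E}[\,\text{number of critical sets}\,] \;=\; 1 + \sum_{k=1}^{\alpha n}\binom{\alpha n}{k} q_k^n,
\end{equation*}
where the leading $1$ corresponds to the empty combination. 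The theorem then reduces to showing that the sum is $O(n^{-(w-1)})$.

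Next, I would bound the sum by splitting on $k$. The $k=1$ term is exactly $\alpha n(1-p)^n \le \alpha n\cdot e^{-pn} = \alpha\,n^{1-w}$, which already matches the stated bound. For $2 \le k \le k_0$ with $pk_0 \lesssim 1$, I would combine $\binom{\alpha n}{k}\le (e\alpha n/k)^k$ with the Taylor estimate $q_k \le 1 - pk + (pk)^2$, obtaining $q_k^n \le n^{-wk}(1+o(1))$ and hence a per-term bound of order $n^{-k(w-1)}$; summed, this contributes only $O(n^{-2(w-1)})$. For the remaining large $k$ with $pk \to \infty$, the factor $(1-2p)^k$ is $o(1)$, so $q_k \le \tfrac{1}{2}+o(1)$ and $q_k^n$ is exponentially small in $n$; since $\alpha < 1$ is a fixed constant, the crude bound $\binom{\alpha n}{k}\le 2^{\alpha n}$ leaves each term at most $2^{-(1-\alpha-o(1))n}$, which is exponentially smaller than any polynomial in $1/n$ and is therefore absorbed when summed over at most $\alpha n$ values of $k$.

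The main obstacle I anticipate is the intermediate transition regime $k \sim n/(w\log n)$, where neither $q_k\approx 1-pk$ nor $q_k\approx 1/2$ is sharp and the binomial factor $\binom{\alpha n}{k}$ is simultaneously large. Here I would rely on the uniform estimate $q_k \le \tfrac{1}{2}(1 + e^{-2pk})$ combined with the entropy bound $\binom{\alpha n}{k}\le 2^{\alpha n H_2(k/(\alpha n))}$ (binary entropy in bits), and verify that the exponent $\alpha H_2(k/(\alpha n)) - \log_2\!\bigl(2/(1+e^{-2pk})\bigr)$ is uniformly negative for $\alpha < 1$ across the transition. Checking this monotonicity cleanly so that the three regimes glue together, with the $k=1$ term left as the leading $O(n^{-(w-1)})$ contribution, is the delicate bookkeeping; apart from that, the argument is a standard typicality-style computation.
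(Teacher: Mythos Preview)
Your approach is correct and is precisely the first-moment computation that underlies Kolchin's argument: write the expected number of critical sets as $1+\sum_{k\ge 1}\binom{\alpha n}{k}q_k^n$ with $q_k=\tfrac{1}{2}(1+(1-2p)^k)$, observe that the $k=1$ term is $\Theta(n^{1-w})$, and show all other $k$ contribute less. The paper itself does not give a self-contained proof of this theorem; it simply cites \cite{K98} and notes that the inequality $1-\tfrac{2w\log n}{n}\le 1-\tfrac{2\log n}{n}$ lets the bounds in Kolchin's Lemma~3.3.2 carry over to general $w\ge 1$. So your write-up is more explicit than what the paper records for this particular statement.

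Where the comparison becomes interesting is the transition regime you flagged. The paper's appendix proves a closely related first-moment statement (\cref{lem:almost_linear}, counting linearly dependent \emph{column} sets rather than row sets) and handles exactly this difficulty by a different device: rather than splitting $[1,\alpha n]$ into small/transition/large ranges and bounding each with separate inequalities, it shows that the summand $f(k)=\binom{n}{k}(1+a^k)^{\alpha n}$, viewed as a smooth function of $k$ via Gamma functions, has $f'(k)<0$ on $[1,\,zn/\log n]$ and exactly one zero of $f'$ on $[zn/\log n,\,\epsilon n]$. This forces a single interior local minimum, so $f(k)$ is bounded by $\max\{f(1),f(\epsilon n)\}$, and the two endpoint values are then evaluated directly. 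Your regime-splitting with the entropy bound $\binom{\alpha n}{k}\le 2^{\alpha n H_2(k/(\alpha n))}$ and the uniform estimate $q_k\le\tfrac{1}{2}(1+e^{-2pk})$ reaches the same conclusion and is arguably more elementary, but it requires checking that the exponent $\alpha H_2(k/(\alpha n))-\log_2\!\bigl(2/(1+e^{-2pk})\bigr)$ stays negative uniformly across the glue points, which is the bookkeeping you anticipated. The derivative/unimodality argument in the appendix trades that bookkeeping for a longer calculus computation but yields a clean ``maximum at the endpoints'' structure; either route closes the argument.
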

\begin{proof}
Slight modifications of the proof of Lemma 3.3.2 in \cite{K98} yield the result.  Note for reference that 
\be 
1-\frac{2 w \log(n)}{n}\leq 1-\frac{2 \log(n)}{n}
\ee
so the upper bounds present in the proof follow immediately.
\end{proof}
Hence, by Markov's inequality, if a matrix $A$ is sampled as described above, then the probability that $A$ is not full rank is upper bounded by $O\left( \frac{1}{n^{w-1}} \right)$.  Combined with the Rank Nullity theorem from linear algebra, this theorem says that if we randomly sample a parity check matrix from this ensemble, then the rows will be linearly independent with high probability.  Hence, with high probability a random $m\times n$ parity check matrix $H$ will yield a code with dimension $n-m$.  We will need a similar result on the distance of a code sampled from this ensemble.  We show that such a code has linear distance with high probability, and that the code retains this linear distance after the erasure channel:
\begin{restatable}[Linear distance]{lemma}{disterased}\label{lem:dist_erased}
Let $H$ be a $\alpha n \times n$ binary matrix with $\alpha <1$, where each entry is chosen uniformly at random according to i.i.d. $\text{Bern}(q)$ where $q=\frac{w \log(n)}{n}$. Let $C$ be the code with $H$ as its parity check matrix.  Further, let $d_{C}$ be the distance of the code $C$.

Let $K$ be the first $\beta n$ bits (corresponding to the first $\beta n$ columns of $H$) and assume that $\alpha > \beta$.  Denote the code $C$ restricted to the bits $V\setminus K$ as $C_{V\setminus K}$.  If $\epsilon'$ satisfies:
\be 
(1-\beta)h\left(\frac{\epsilon'}{1-\beta}\right) < \alpha-\beta \,\,\,\,\,\,\, \text{and} \,\,\,\,\,\,\, h(\epsilon') < \alpha
\ee
then,
\be 
d_{C_{V\setminus K}}> \epsilon' n
\ee
with probability at least
\be 
1-O\left(\frac{1}{n^{w\alpha-2}}\right)
\ee
\end{restatable}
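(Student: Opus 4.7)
The plan is a first-moment / union-bound argument. Writing any $c \in \mathbb{F}_2^n$ as $(a,b)$ with $a \in \mathbb{F}_2^{\beta n}$ on the $K$-coordinates and $b \in \mathbb{F}_2^{(1-\beta)n}$ on the $V\setminus K$-coordinates, a nonzero codeword of $C_{V\setminus K}$ of weight at most $\epsilon' n$ is the $b$-part of some $c=(a,b) \in C = \ker H$ with $b \neq 0$ and $|b|\leq \epsilon' n$. Hence by the union bound,
\[
\Pr\bigl[d_{C_{V\setminus K}} \leq \epsilon' n\bigr] \;\leq\; \sum_{\substack{c=(a,b) \\ 0<|b|\leq \epsilon' n}} \Pr[Hc = 0].
\]

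Because the entries of $H$ are i.i.d.~$\mathrm{Bern}(q)$, each row $h_i$ satisfies $\Pr[h_i \cdot c = 0] = \tfrac{1}{2}\bigl(1 + (1-2q)^{|c|}\bigr)$, so for fixed $c$ of weight $s=|a|+|b|$,
\[
\Pr[Hc = 0] \;=\; \Bigl[\tfrac{1}{2}\bigl(1+(1-2q)^{s}\bigr)\Bigr]^{\alpha n} \;=:\; p_s.
\]
I will use two bounds: in the ``small $s$'' regime where $qs \ll 1$, a Bernoulli expansion gives $p_s \lesssim n^{-\alpha w s}$; in the ``large $s$'' regime where $(1-2q)^s \to 0$, the factor in brackets is essentially $1/2$, yielding $p_s \leq 2^{-\alpha n(1-o(1))}$.

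I then split the outer sum by $s=|a|+|b|$ into two ranges. For $s \leq \epsilon' n$ (Range I), the constraint $|b|\leq \epsilon' n$ is automatic and Vandermonde's identity yields $\sum_{u+t=s,\,t\geq 1}\binom{\beta n}{u}\binom{(1-\beta)n}{t} \leq \binom{n}{s}$, so this part is bounded by $\sum_{s=1}^{\epsilon' n}\binom{n}{s}p_s$: the $s=1$ term contributes $O(n^{1-\alpha w})$, which dominates the small-$s$ tail, while the near-$\epsilon' n$ tail contributes roughly $2^{n(h(\epsilon')-\alpha)}$, exponentially small exactly when $h(\epsilon')<\alpha$. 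For $\epsilon' n < s \leq \epsilon' n+\beta n$ (Range II), the constraint $|b|\leq \epsilon' n$ is binding; I factor the counting over $a$ and $b$ as
\[
\sum_{u=0}^{\beta n}\binom{\beta n}{u}\sum_{t=1}^{\epsilon' n}\binom{(1-\beta)n}{t} \;\leq\; 2^{\beta n}\cdot 2^{(1-\beta)n\, h(\epsilon'/(1-\beta))},
\]
and multiply by $p_s \leq 2^{-\alpha n(1-o(1))}$; this contribution is exponentially small exactly when $(1-\beta)h(\epsilon'/(1-\beta)) < \alpha-\beta$.

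Adding the two ranges gives an expected count of $O(n^{1-\alpha w})$ bad codewords, and Markov's inequality then yields the stated probability bound. The main technical obstacle will be the crossover between the two regimes of $p_s$ near $s \approx n/(w\log n)$: one must verify that no intermediate-$s$ contribution beats the $n^{1-\alpha w}$ bound from $s=1$ and that both entropy-based tails remain exponentially decaying under the stated conditions, rather than letting slack in the Bernoulli expansion consume the hypothesis margins.
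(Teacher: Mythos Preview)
Your proposal is correct and follows essentially the same first-moment/union-bound argument as the paper. The paper organizes it slightly differently---it first proves as a separate lemma that $C$ itself has distance $>\epsilon n$ (this is exactly your Range~I computation, including the delicate crossover analysis near $s\approx n/(w\log n)$, which the paper handles by showing $\binom{n}{s}p_s$ is maximized at an endpoint), then conditions on that event so that only your Range~II remains---but the underlying estimates and the roles of the two entropy hypotheses are identical.
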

\begin{proof}
See Appendix.
\end{proof}

We will also require the following simple result on the sum of binary random variables:  
\begin{lemma}[\cite{G63} Lemma 4.1]
Let $\{B_i\}$ be a set of $k$ independent $Bern(p)$ random variables.  Then we have that
\begin{align}
\mathbb{P} \left[\sum_{i=1}^k B_i =1 \,\, \mod \,\, 2 \right]=\frac{1-(1-2p)^k}{2}
\end{align}
\end{lemma}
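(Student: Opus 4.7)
The plan is to prove the identity by a short generating-function (or equivalently Fourier-on-$\mathbb{F}_2$) argument, with an inductive proof as a natural backup. The key observation is that for any integer-valued random variable $S$ the quantity $\frac{1}{2}\bigl(\mathbb{E}[1^S]-\mathbb{E}[(-1)^S]\bigr)$ equals the probability that $S$ is odd, since $\tfrac{1-(-1)^s}{2}$ is the indicator of $s$ being odd. Applying this to $S=\sum_{i=1}^k B_i$ reduces the problem to evaluating $\mathbb{E}[(-1)^S]$.

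Concretely, I would first write $\mathbb{E}[(-1)^{B_i}] = (1-p)\cdot 1 + p\cdot(-1) = 1-2p$ for each Bernoulli, then invoke independence to get
\begin{equation}
\mathbb{E}\bigl[(-1)^{\sum_i B_i}\bigr] \;=\; \prod_{i=1}^k \mathbb{E}\bigl[(-1)^{B_i}\bigr] \;=\; (1-2p)^k.
\end{equation}
Combined with the elementary identity above and $\mathbb{E}[1^S]=1$, this gives
\begin{equation}
\mathbb{P}\!\left[\textstyle\sum_{i=1}^k B_i = 1 \bmod 2\right] \;=\; \frac{1-(1-2p)^k}{2},
\end{equation}
which is exactly the claim.

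As a sanity check I would verify the $k=1$ case by hand ($\tfrac{1-(1-2p)}{2}=p$) and note that the formula also behaves sensibly at the extremes: $p=0$ gives probability $0$, $p=1/2$ gives $1/2$, and $p=1$ gives $\tfrac{1-(-1)^k}{2}$, i.e.\ the parity of $k$. If the generating-function phrasing is felt to be heavier than necessary for a one-line lemma, I would alternatively give a one-paragraph induction on $k$: the base case is trivial, and the inductive step computes
\begin{equation}
\tfrac{1+(1-2p)^k}{2}\cdot p \;+\; \tfrac{1-(1-2p)^k}{2}\cdot(1-p) \;=\; \tfrac{1-(1-2p)^{k+1}}{2}
\end{equation}
after grouping the $(1-2p)^k$ terms, using $p-(1-p)=2p-1=-(1-2p)$. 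I do not anticipate any real obstacle; the only thing to be careful about is the sign bookkeeping in the last step of the induction, which is why I slightly prefer the generating-function derivation for cleanliness.
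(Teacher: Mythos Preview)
Your proof is correct; the Fourier/generating-function computation of $\mathbb{E}[(-1)^S]$ is the standard clean way to establish this identity, and your inductive alternative is also fine. Note that the paper itself does not give a proof of this lemma at all---it simply cites Gallager~\cite{G63}---so there is nothing in the paper to compare your argument against beyond observing that you have supplied a valid self-contained proof where the paper defers to the reference.
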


%
This establishes the following corollary:

\begin{corollary}\label{cor:rand_sum_vectors}
Let $\{b_i\}$ be a set of $k$ vectors of random variables such that all entries of each $b_i$ are independent $Bern(p)$ random variables.  
For any fixed word $c \in \mathbb{F}_2^n$ such that $|c|=g$,
\begin{align} 
\mathbb{P} \left[\sum_i b_i = c \right]=\\
\nonumber \frac{1}{2^n}\left[ 1+(1-2p)^{k}\right]^{n-g} \left[ 1-(1-2p)^{k}\right]^{g}
\end{align}
\end{corollary}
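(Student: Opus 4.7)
The plan is to reduce the vector statement to a coordinate-wise application of the preceding lemma. Observe that the coordinates of the sum $\sum_i b_i \in \mathbb{F}_2^n$ are mutually independent: the $j$-th coordinate of $\sum_i b_i$ is $\sum_{i=1}^k (b_i)_j \bmod 2$, and the collection $\{(b_i)_j : i \in [k]\}$ uses only the $j$-th coordinates of each $b_i$, which are disjoint from the random variables appearing in any other coordinate by the independence assumption on all entries of all $b_i$.

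For each fixed coordinate $j$, the random variable $\sum_{i=1}^k (b_i)_j \bmod 2$ is exactly a sum of $k$ i.i.d. $\text{Bern}(p)$ variables reduced mod $2$. The preceding lemma (\cite{G63}, Lemma 4.1) gives
\begin{align}
\Prob\left[ \sum_{i=1}^k (b_i)_j = 1 \bmod 2 \right] = \frac{1-(1-2p)^k}{2},
\end{align}
and hence $\Prob[\sum_i (b_i)_j = 0 \bmod 2] = \frac{1+(1-2p)^k}{2}$.

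For $\sum_i b_i = c$ to hold, the $j$-th coordinate of the sum must equal $c_j$ for every $j \in [n]$. By independence of the coordinates, the joint probability factors as a product. Out of the $n$ coordinates, exactly $g$ satisfy $c_j=1$ and $n-g$ satisfy $c_j=0$, so multiplying the appropriate single-coordinate probabilities yields
\begin{align}
\Prob\left[\sum_i b_i = c\right] = \left(\frac{1+(1-2p)^k}{2}\right)^{n-g}\left(\frac{1-(1-2p)^k}{2}\right)^{g},
\end{align}
and pulling out the $2^n$ in the denominator gives precisely the claimed formula.

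There is no real obstacle here; the statement is essentially a reformulation of the preceding lemma applied coordinate-wise. The only point worth emphasizing in the write-up is the independence across coordinates, since the lemma as stated is one-dimensional and the corollary is $n$-dimensional. No concentration or counting argument is required.
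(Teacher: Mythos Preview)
Your proof is correct and matches the paper's approach: the paper states the corollary as an immediate consequence of the preceding lemma without further argument, and your coordinate-wise application with independence is exactly the intended justification.
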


We will need the ``Principle of Implicit Measurement'' to analyze the correcting properties of our graph codes.  It is standard in quantum information theory, and it states that ``lost'' quantum systems can be assumed to be measured in some basis of our choosing.  The user can treat the system as though it was measured, but does not know the outcome.  
\begin{theorem}\label{thm:implicit_measurement}[Principle of Implicit measurement]
Let $\ket{\psi}=\sum_{i, j} c_{i, j} \ket{i}_K \ket{j}_{V\setminus K}$ be a pure quantum state on $n=|V|$ qubits.  If the qubits $K$ are lost, then the reduced density matrix on qubits $V \setminus K$ is given by the partial trace:
\begin{align}
\rho_{V \setminus K} =Tr_{K} (\ket{\psi}\bra{\psi})=\\
\nonumber \sum_{i} \left( \bra{i}_K \otimes \mathbb{I}_{V \setminus K}\right) \ket{\psi} \bra{\psi} \left( \ket{i}_K \otimes \mathbb{I}_{V \setminus K}\right)
\end{align}
\end{theorem}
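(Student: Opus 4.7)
The statement is essentially the standard characterization of the partial trace as a sum of projectors onto a chosen basis, so my plan is to reduce it to a direct index-chasing computation and check that the two expressions agree termwise.

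First, I would start from the definition of the partial trace in the $K$ subsystem: fix the computational basis $\{\ket{\ell}_K\}$ on $K$ and write
\begin{equation}
\mathrm{Tr}_K(\ket{\psi}\bra{\psi}) = \sum_\ell \bigl(\bra{\ell}_K \otimes \mathbb{I}_{V\setminus K}\bigr)\, \ket{\psi}\bra{\psi}\, \bigl(\ket{\ell}_K \otimes \mathbb{I}_{V\setminus K}\bigr).
\end{equation}
This is the standard equivalent form of the partial trace in any chosen orthonormal basis of $K$, which one can verify by noting both sides act identically on arbitrary operators of the product form $A_K \otimes B_{V\setminus K}$ and both are linear.

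Next I would substitute the expansion $\ket{\psi} = \sum_{i,j} c_{i,j}\ket{i}_K\ket{j}_{V\setminus K}$ and simplify. A single term becomes
\begin{equation}
\bigl(\bra{\ell}_K\otimes \mathbb{I}\bigr)\ket{\psi} = \sum_{i,j} c_{i,j}\langle \ell|i\rangle \ket{j}_{V\setminus K} = \sum_j c_{\ell,j}\ket{j}_{V\setminus K},
\end{equation}
so after summing over $\ell$ the right hand side collapses to $\sum_{\ell}\sum_{j,j'} c_{\ell,j}c_{\ell,j'}^*\,\ket{j}\bra{j'}$. Computing the partial trace directly from $\ket{\psi}\bra{\psi} = \sum_{i,j,i',j'} c_{i,j}c_{i',j'}^*\,\ket{i}\bra{i'}\otimes\ket{j}\bra{j'}$ by taking $\mathrm{Tr}_K$ in the same computational basis yields exactly the same expression, since $\mathrm{Tr}(\ket{i}\bra{i'}) = \delta_{i,i'}$ forces $i=i'=\ell$ in the surviving terms.

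The two expressions match termwise, which proves the equality. I do not anticipate any serious obstacle: the content of the theorem is that an unknown subsystem can be treated as if it were projectively measured in any fixed basis with the outcome discarded, and this is just the defining property of the partial trace. The only thing to be careful about is the ordering conventions in the tensor factors and making sure the basis $\{\ket{\ell}_K\}$ indexing is consistent between the two sides; once that bookkeeping is done, the identity is immediate.
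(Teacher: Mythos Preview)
Your proposal is correct: the identity is exactly the basis-expansion characterization of the partial trace, and your index-chasing verification is the standard way to see it. Note that the paper does not actually prove this statement; it is stated as a well-known fact in quantum information theory and left without proof, so there is no ``paper's own proof'' to compare against.
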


%
%

Lastly, we will state the Chernoff and Markov bounds for later use:
\begin{lemma}\label{lem:chernoff}[Chernoff and Markov Bounds]
Let $X$ be a random variable.  Then, we have:
\be 
\mathbb{P} (X \geq a) \leq \frac{\mathbb{E}(e^{t\cdot{} X})}{e^{t\cdot{} a}}\,\,\,\,\,\,\,\,\,\,\,\,\, \mathbb{P}(X \geq a) \leq \frac{\mathbb{E}(X)}{a}
\ee
\end{lemma}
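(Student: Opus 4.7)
The plan is to obtain Markov's inequality directly from an indicator-function comparison and then derive the Chernoff-type bound as an immediate corollary by applying Markov to the nonnegative random variable $e^{tX}$.

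First I would treat Markov's inequality. Under the implicit nonnegativity assumption $X \geq 0$ (which is needed for the stated form to be meaningful), I would observe the pointwise inequality $a \cdot \mathbf{1}[X \geq a] \leq X$, valid for every $a > 0$: when $X \geq a$ the left side is $a$ and the right side is at least $a$, and when $X < a$ the left side is $0$ while the right side is still nonnegative. Taking expectations of both sides and using linearity together with $\mathbb{E}(\mathbf{1}[X \geq a]) = \mathbb{P}(X \geq a)$ gives $a \cdot \mathbb{P}(X \geq a) \leq \mathbb{E}(X)$, and dividing through by $a$ yields the second stated inequality.

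For the Chernoff bound, I would fix any $t > 0$ and exploit the fact that $x \mapsto e^{tx}$ is strictly increasing, so the event $\{X \geq a\}$ coincides exactly with $\{e^{tX} \geq e^{ta}\}$. Since $e^{tX}$ is a nonnegative random variable, Markov's inequality just established applied with threshold $e^{ta}$ immediately gives
\begin{equation}
\mathbb{P}(X \geq a) \;=\; \mathbb{P}\!\left(e^{tX} \geq e^{ta}\right) \;\leq\; \frac{\mathbb{E}(e^{tX})}{e^{ta}},
\end{equation}
which is the first stated inequality. Note that no assumption on the sign of $X$ is needed here because $e^{tX}$ is automatically nonnegative.

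There is no substantive obstacle — both bounds are textbook facts and the argument is essentially mechanical. The only point worth flagging for the reader is the implicit positivity conventions: Markov's form requires $X \geq 0$ and $a > 0$, while the Chernoff form requires only $t > 0$ (and one typically optimizes over $t$ in applications, though the lemma states the bound for a generic $t$).
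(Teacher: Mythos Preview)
Your proof is correct and follows the standard textbook route: establish Markov via the indicator comparison $a\cdot\mathbf{1}[X\ge a]\le X$, then obtain the Chernoff bound by applying Markov to the nonnegative variable $e^{tX}$ using monotonicity of the exponential. The paper itself does not prove this lemma at all---it is merely stated as a well-known fact for later reference---so there is no alternative approach to compare against, and your argument is exactly what a reader would supply if asked.
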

We will make use of a simple corollary to the Chernoff bound:
\begin{corollary}\label{cor:chernoff}
Let $X=\sum_{i=1}^n X_i$ be a random variable with average $\mathbb{E}(X)=\mu$ such that the variables $X_i$ are independent.  Then, it holds that:
\be 
\mathbb{P} \left(\left|X-\mu \right| \geq \gamma \mu \right) \leq 2 e^{-\frac{\gamma^2\mu}{3}}
\ee

\end{corollary}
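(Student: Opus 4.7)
The plan is to derive both tails of the multiplicative Chernoff inequality and then combine them with a union bound. I will assume the standard setting in which each $X_i$ takes values in $[0,1]$ (most often Bernoulli), which is the regime in which the stated bound $2 e^{-\gamma^2 \mu / 3}$ holds; this matches the way the corollary is subsequently applied.

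First I would apply the first inequality of \Cref{lem:chernoff} (the exponential Markov bound) to $X$, together with independence of the $X_i$, to obtain
\begin{equation*}
\mathbb{P}(X \geq a) \;\leq\; e^{-ta}\,\prod_{i=1}^n \mathbb{E}\bigl(e^{t X_i}\bigr) \quad \text{for all } t>0.
\end{equation*}
For $X_i \in [0,1]$ one has the standard moment-generating-function estimate $\mathbb{E}(e^{t X_i}) \leq \exp\!\bigl((e^t-1)\mathbb{E}(X_i)\bigr)$, and multiplying these across $i$ yields $\prod_i \mathbb{E}(e^{tX_i}) \leq \exp\!\bigl((e^t-1)\mu\bigr)$.

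Next I would handle the upper tail $X \geq (1+\gamma)\mu$ by choosing the optimal $t = \log(1+\gamma)$, which produces
\begin{equation*}
\mathbb{P}\bigl(X \geq (1+\gamma)\mu\bigr) \;\leq\; \exp\!\Bigl(-\mu\bigl[(1+\gamma)\log(1+\gamma) - \gamma\bigr]\Bigr).
\end{equation*}
A routine calculus check shows that $(1+\gamma)\log(1+\gamma) - \gamma \geq \gamma^2/3$ for $\gamma \in [0,1]$, giving the upper-tail bound $e^{-\gamma^2 \mu / 3}$. For the lower tail $X \leq (1-\gamma)\mu$ I would run the analogous argument with $t<0$, which gives $\mathbb{P}(X \leq (1-\gamma)\mu) \leq \exp(-\gamma^2\mu/2)$; since $1/2 \geq 1/3$, this is bounded by the same quantity $e^{-\gamma^2\mu/3}$. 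A union bound over the two events $\{X \geq (1+\gamma)\mu\}$ and $\{X \leq (1-\gamma)\mu\}$ then produces the factor of $2$ and completes the proof.

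There is no real obstacle here: this is a textbook derivation. The only subtle point is the implicit boundedness assumption on the $X_i$, since without some such hypothesis the precise constant $1/3$ in the exponent is not valid in general. I would make that assumption explicit at the start of the proof, noting that in every place this corollary is invoked later the summands are indeed $\{0,1\}$-valued (indicator or Bernoulli variables).
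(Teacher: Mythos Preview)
Your derivation is correct and is the standard textbook argument; the paper itself does not supply a proof of this corollary at all, simply stating it as a well-known consequence of \Cref{lem:chernoff}. Your explicit remark that the bound with constant $1/3$ requires $X_i\in[0,1]$ (which indeed holds in every later application, where the $X_i$ are Bernoulli) is a useful clarification that the paper leaves implicit.
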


\section{polylog-LDPC Codes}
\subsection{Coset Measurement}

For our recovery scheme, we will require the notion of a ``coset'' measurement.  Such a measurement allows us to distinguish between different labeled graph states that are in different cosets of $C$.  Consider $C$ as a subgroup of $\mathbb{F}_2^n$.  For each vector $e \in \mathbb{F}_2^n$ we have the coset $C+e$.  Let $\{C+e\}$ be the set of all cosets of the code $C$.  Define the following subspaces:
\be 
V_e=\text{span}_{c \in C} Z_{c+e}\ket{G}
\ee
Further, define $P_e$ to be the projection onto this subspace.  We define the observable:
\be 
\mathcal{O}=\sum_{\{C+e\}} \lambda_e P_e
\ee
where the sum goes over distinct cosets of $C$ and for $e\neq e'$ we have $\lambda_e\neq \lambda_{e'}$.  It is well known that distinct cosets are disjoint, so $P_eP_{e'}=0$ for $e \neq e'$.  Hence, such an observable is well defined and serves to distinguish cosets.

\subsection{Recovery Operation}\label{sec:recovery}
Suppose we send the state $\ket{\psi} \in (G, C)$ through an erasure channel and $pn$ bits are erased.  Denote the dimension of the code $C$ as $Rn$, and the set of erased bits $K$.  Construct the following $\left[(R+p)n \right] \times (1-p)n$ matrix $F$.  Let the first $Rn$ rows of $F$ be the generators of the code $C$ restricted to the non erased bits, and let the remaining rows be the transpose of the cut matrix between the erased bits and the non-erased bits:
\begin{equation}
F= \,\,\,{\tiny {(R+p)n}} \Bigg \updownarrow \overset{ \xleftrightarrow{ \,\,\,\,(1-p) n \,\,\,\, } }{ \begin{bmatrix}
\,& C_{V\setminus K}& \, \\
\, & A_{cut}^T & \,
\end{bmatrix}}
\end{equation}
Further, let $G'$ be the subgraph of $G$ induced by the vertex set $V\setminus K$.
\begin{lemma}
If the matrix $F$ is full rank over $\mathbb{F}_2$, then there is a quantum operation $\mathcal{R}$ which recovers from the erasure.
\end{lemma}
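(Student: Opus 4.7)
The plan is to construct $\mathcal{R}$ explicitly, using the full-rank hypothesis on $F$ in two distinct places: once to identify the implicit measurement outcome on the erased qubits, and once to invert the residual Pauli and phase on the surviving qubits.

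By the Principle of Implicit Measurement (\cref{thm:implicit_measurement}), I will treat the erasure of the qubits in $K$ as a computational-basis measurement with an unknown outcome $y\in\mathbb{F}_2^{|K|}$, so that the residual state on $V\setminus K$ is a mixture over branches $\ket{\phi_y}\propto\bra{y}_K\ket{\psi}$. For a codeword $\ket\psi=\sum_{c\in C}\alpha_c Z_c\ket G$, splitting $c=(c_K,c_{V\setminus K})$, using $\bra{y}_K Z_{c_K}=(-1)^{y\cdot c_K}\bra{y}_K$, and applying \cref{lem: g_state} should yield
\begin{equation*}
\ket{\phi_y}\;\propto\;Z_{A_{cut}y}\sum_{c\in C}\alpha_c\,(-1)^{y\cdot c_K}\,Z_{c_{V\setminus K}}\ket{G'},
\end{equation*}
so each branch lies in the coset $V_{A_{cut}y}$ of the graph code $(G',C_{V\setminus K})$.

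I would then define $\mathcal{R}$ in three steps. First, apply the coset measurement $\mathcal{O}$ for $(G',C_{V\setminus K})$; because each branch is supported in a single $V_e$ this causes no further decoherence and just reports the coset label $e$. The full-rank hypothesis on $F$ is equivalent to the statement that $y\mapsto A_{cut}y\bmod C_{V\setminus K}$ is injective (no nontrivial combination of the bottom $pn$ rows $A_{cut}^T$ of $F$ can be absorbed into a combination of its top $Rn$ generator rows), so $e$ uniquely determines $y$. Second, apply $Z_{A_{cut}y}$ to shift into the codespace of $(G',C_{V\setminus K})$. Third, undo the relative phase $(-1)^{y\cdot c_K}$: the same full-rank hypothesis implies that $c\mapsto c_{V\setminus K}$ is a bijection $C\to C_{V\setminus K}$, so there is a fixed linear map $L$ with $c_K=L\,c_{V\setminus K}$, and the correction is the logical diagonal unitary $D_y$ sending $Z_{c_{V\setminus K}}\ket{G'}\mapsto(-1)^{(L^{T}y)\cdot c_{V\setminus K}}\,Z_{c_{V\setminus K}}\ket{G'}$.

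The step I expect to require the most care is the third one: $D_y$ is naturally defined on the $k$-qubit logical register, not directly on the physical qubits in $V\setminus K$. I plan to package it as $\mathcal{R}_y=E_{(G,C)}\circ D_y\circ E_{(G',C_{V\setminus K})}^{\dagger}\circ Z_{A_{cut}y}$, where the two graph-code encoding isometries are well defined because \cref{lem:labeled_gstate} makes the states $Z_c\ket G$ orthonormal and because the bijection above matches logical dimensions. Composing with the coset measurement and postselection on $e$ then produces the claimed channel $\mathcal{R}$; since the lemma only asserts existence, we do not need to worry about implementing $D_y$ efficiently.
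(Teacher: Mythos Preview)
Your proposal is correct and follows essentially the same approach as the paper's proof: both use the implicit measurement picture and \cref{lem: g_state} to reduce to branches indexed by $y$, invoke the full-rank hypothesis on $F$ to argue that the coset measurement for $(G',C_{V\setminus K})$ identifies $y$ without disturbing the encoded state, undo the Pauli $Z_{A_{cut}y}$, and then use the induced bijection $C\to C_{V\setminus K}$ to remove the relative phases and re-encode. The only cosmetic difference is that the paper carries out the re-encoding concretely (appending $\ket{+}$ states, reapplying the controlled-phase gates to rebuild $\ket{G}$, then applying the well-defined unitary $Z_{c_{V\setminus K}}\ket{G}\mapsto Z_c\ket{G}$ and a final diagonal phase), whereas you package the same content abstractly as $E_{(G,C)}\circ D_y\circ E_{(G',C_{V\setminus K})}^{\dagger}$.
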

\begin{proof}
Denote the quantum state before the channel as:
\be 
\ket{\psi}=\sum_{c \in C} b_c Z_c \ket{G}
\ee
Suppose without loss of generality that the erased bits, $K$, are the first $pn$ bits (or rearrange the bits so that this is the case).  For each codeword $c\in C$, decompose $c$ as the concatenation of its value on the erased bits with its value on the non erased bits: $c=(c_K, c_{V\setminus K})$.  By hypothesis, and by \cref{lem: g_state}, after the erasure channel, we are left with the following density matrix on the subsystems $V\setminus K$:
\begin{equation}\label{eq:41} 
\rho=\sum_{j\in \mathbb{F}_2^{|K|}}\rho_j \ket{\phi_j} \bra{\phi_j}
\end{equation}
where:
\be 
\sum \rho_j=1 \,\,\,\,\,\,\,\text{and}\,\,\,\,\,\,\,\ket{\phi_j}=\sum_{c \in C} b_c (-1)^{j \cdot{} c_K} Z_{c_{V\setminus K}+A_{cut}j} \ket{G'}_{V \setminus K}
\ee

If $F$ is full rank, then the coset measurement of $C_{V\setminus K}$ can be used to determine the string $j$.  Indeed, if $F$ is full rank then each element of the range of $A_{cut}$ belongs to a different coset.  So, measuring which coset the graph state falls into will determine the string $j$.  Note further that such a measurement will not disturb the encoded information since for fixed $j$ the states $\{Z_{c_{V\setminus K}+A_{cut} j}\}$ all lie in a particular coset of $C_{V\setminus K}$.  

Let us describe this more formally.  Suppose that $A_{cut}j$ falls into a particular coset of $C_{V\setminus K}$:
\begin{equation}\label{eq:42}
A_{cut}j =c_{V\setminus K}''+e
\end{equation}
for some fixed $c_{V\setminus K}'' \in C_{V\setminus K}$.  If $P_{e'}$ is the projector onto a different coset, then $P_{e'}\ket{\phi_j}=0$:
\be 
P_{e'}\ket{\phi_j}=\sum_{c, c' \in C} \sigma_{(c, c', e', j)}\bra{G'} Z_{A_{cut}j+c_{V\setminus K}+c'_{V\setminus K}+e'}\ket{G'}
\ee
for some operators $\sigma_{(c, c', e', j)}$.  Since $e'$ corresponds to a distinct coset, the string $ A_{cut}j+c_{V\setminus K}+c'_{V\setminus K}+e'$ is always in some nonzero coset, so by \cref{lem:labeled_gstate}, $P_{e'}\ket{\phi_j}=0$.

The projector $P_e$ has the following effect on the state $\ket{\phi_j}$:
\begin{align}
P_e\ket{\phi_j}=\left[ \sum_{c_{V\setminus K}'}Z_{c_{V\setminus K}'+e}\ket{G'}\bra{G'} Z_{c_{V\setminus K}'+e}\right]\ket{\phi_j}\\
\nonumber = \sum_{\substack{c_{V\setminus K}', c_{V\setminus K} : \\ c'_{V\setminus K} +e =c_{V\setminus K}+A_{cut} j}} b_c (-1)^{j\cdot{} c_K} Z_{c_{V\setminus K}'+e}\ket{G'} 
\end{align}
The relation $c'_{V\setminus K}+e =c_{V\setminus K}+A_{cut}j$ fixes $c_{V\setminus K}'$ given $c_{V\setminus K}$.  It is easy to see that we get back exactly the state $\ket{\phi_j}$.  

To complete our description of the recovery operation $\mathcal{R}$ we need to give the unitary that can recover the original quantum state $\ket{\psi}$ given the state $\ket{\phi_j}$ and given the string $j$.   By appending extra copies of the state $\ket{+}$, we can achieve the state:
\be 
\sum_{c \in C} b_c (-1)^{j \cdot{} c_K} Z_{c_{V\setminus K}+A_{cut} j}\ket{+\hdots{} +}_K \ket{G'}_{V\setminus K}
\ee 
Now we an apply controlled phase operations (using \cref{def:g_state}) to transform the state to:
\be 
\sum_{c\in C} b_c (-1)^{j\cdot{} c_K} Z_{c_{V\setminus K}+A_{cut} j} \ket{G}_V
\ee
Now we can apply the unitary $Z_{A_{cut}j}$ to transform the state to:
\be 
\sum_{c\in C} b_c (-1)^{j \cdot{} c_K} Z_{c_{V\setminus K}}\ket{G}
\ee
Since 
\be 
\langle G| Z_{c_{V\setminus K}} Z_{c'_{V\setminus K}} | G \rangle=0=\langle G | Z_{c} Z_{c'} | G \rangle
\ee
the unitary that sends $Z_{c_{V\setminus K}}\ket{G} \rightarrow Z_c \ket{G}$ is well defined.  We can apply it to obtain:
\be 
\sum_{c\in C} b_c (-1)^{j \cdot{} c_K} Z_c \ket{G}
\ee 
Finally we can apply a unitary that is diagonal in the $Z_c \ket{G}$ basis to remove the phase and recover $\ket{\psi}$
\end{proof}

To leverage the previous result we have to show that the matrix $F$ will be full rank with high probability:
\begin{theorem}
For any $0.33 < p <\frac{1}{2}$, let $R < 1-2 p$.  Also fix $q=\frac{w\log(n)}{n}$ for some constant $w>1$.  Let $H$ be a randomly sampled $(1-R)n \times n$ matrix where all $H_{ij}$ are distributed according to i.i.d $\text{Bern}(q)$ random variables.  Denote the code with parity check matrix $H$ as $C$.  Let $G$ be a randomly sampled graph where we begin with an empty graph and add each edge independently with probability $q$.  

In analogy to the erasure channel, suppose each column is `erased' with probability $p$.  I.e. we start with the empty set $K\subseteq [n]$ and add each bit independently to $K$ with probability $p$.  Let $F$ be the matrix defined in \cref{sec:recovery} given the subset $K$.  The probability that $F$ is not full rank satisfies:
\be 
p_e=O\left( \frac{1}{n^{w-1}}+\frac{1}{n^{2(pw-1)}}\right)
\ee
\end{theorem}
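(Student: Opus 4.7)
The plan is to decompose the failure event through the left kernel of $F$: if $(\alpha,\beta) \neq 0$ lies in the left kernel of $F$ (with $\alpha \in \mathbb{F}_2^{Rn}$ and $\beta \in \mathbb{F}_2^{pn}$), then either (A) $\beta = 0$ and the first $Rn$ rows of $F$ are dependent --- equivalently, $C$ contains a nonzero codeword supported entirely on $K$ --- or (B) $\beta \neq 0$ and, writing $S := \mathrm{supp}(\beta)$ and $v_S := \sum_{j \in S} a_j$ for the corresponding combination of rows of $A_{cut}^T$, we have $v_S \in C_{V\setminus K}$. I would first apply a Chernoff bound (\cref{cor:chernoff}) to reduce to $|K| \leq (1+\gamma)pn$, and apply \cref{thm:rand_mat} to $H$ itself to conclude $\dim C = Rn$ with failure probability $O(1/n^{w-1})$, contributing the first term of the stated bound.

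Event (A) is controlled by \cref{lem:dist_erased} with $\alpha_{\mathrm{lem}} = 1-R$ and $\beta_{\mathrm{lem}}$ slightly larger than $p$. Since $R < 1-2p$ forces $\alpha_{\mathrm{lem}} - \beta_{\mathrm{lem}} > 0$, a sufficiently small $\epsilon' > 0$ can be chosen so that both side conditions of the lemma hold. The lemma then gives $d_{C_{V\setminus K}} > \epsilon' n$ (a fortiori ruling out (A)) with failure probability $O(1/n^{w(1-R)-2})$. Taking $R$ arbitrarily close to $1-2p$ and $\beta_{\mathrm{lem}}$ arbitrarily close to $p$, the exponent approaches $2(wp-1)$, yielding the second term.

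The bulk of the work is bounding event (B). Since the graph is independent of $H$ and its edges are independent, each $a_j$ has i.i.d.\ $\mathrm{Bern}(q)$ entries and distinct $a_j$'s are independent. Hence for fixed nonempty $S$ of size $s$, \cref{cor:rand_sum_vectors} gives
\begin{equation}
\mathbb{P}[v_S = c] = \frac{(1+(1-2q)^s)^{(1-p)n-|c|}\,(1-(1-2q)^s)^{|c|}}{2^{(1-p)n}}.
\end{equation}
Conditioning on $d_{C_{V\setminus K}} > \epsilon' n$, every nonzero $c \in C_{V\setminus K}$ has $|c| > \epsilon' n$. Splitting on whether $(1-2q)^s$ is close to $1$ (so that $(1-(1-2q)^s)^{\epsilon' n}$ is super-exponentially small) or close to $0$ (so that $v_S$ is nearly uniform on $\mathbb{F}_2^{(1-p)n}$ and $\mathbb{P}[v_S \in C_{V\setminus K}] \leq |C_{V\setminus K}|\cdot 2^{-(1-p)n} = 2^{(R+p-1)n}$ is exponentially small since $R < 1-2p$), the $c \neq 0$ contributions are seen to be negligible uniformly in $s$. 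The dominant term is thus $\mathbb{P}[v_S = 0] \leq \exp(-qs(1-p)n) = n^{-ws(1-p)}$.

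The main obstacle is then absorbing the union bound over $S$ into the stated asymptotics. Summing $\binom{|K|}{s}\,n^{-ws(1-p)} \leq (ep\,n^{1-w(1-p)}/s)^s$ over $s \geq 1$ gives a geometric series dominated by $s=1$, yielding $O(n^{1-w(1-p)})$. One checks that in the regime $p \geq 0.33$ with $w$ sufficiently large that $wp > 1$, this quantity is bounded by $O(1/n^{w-1} + 1/n^{2(pw-1)})$: depending on $p$, either $wp \geq 2$ gives $n^{1-w(1-p)} \leq 1/n^{w-1}$, or $w(3p-1) \leq 1$ gives $n^{1-w(1-p)} \leq 1/n^{2(pw-1)}$. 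Combining the failure probabilities from the $H$-full-rank event together with (A) and (B) yields the claimed $p_e = O(1/n^{w-1} + 1/n^{2(pw-1)})$.
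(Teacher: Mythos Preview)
Your overall decomposition matches the paper's: the left kernel of $F$ splits into the cases (A) $\beta=0$ (the paper's event $A_1$) and (B) $\beta\neq 0$ (the paper's $A_2\cup A_3$), and you correctly invoke \cref{lem:dist_erased} and \cref{cor:rand_sum_vectors} at the right places. However, the heart of the argument --- bounding the $c\neq 0$ contribution in (B) --- has a genuine gap.

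Your dichotomy ``$(1-2q)^s$ close to $1$'' versus ``close to $0$'' does not cover the critical intermediate regime $s=\Theta(n/\log n)$. There $(1-2q)^s\approx e^{-2wz}$ is a constant strictly between $0$ and $1$, so $(1-(1-2q)^s)^{\epsilon' n}$ is only \emph{exponentially} small, not super-exponentially small. It must then compete against the factor $(1+(1-2q)^s)^{(1-p)n-\epsilon' n}\le 2^{(1-p-\epsilon')n}$ together with the union bound over codewords ($2^{Rn}$) and over sets $S$ of that size ($2^{o(n)}$). Whether the net exponent is negative is a delicate question of constants. The paper optimizes over $k$ in this interval, obtaining an explicit exponent
\[
g(p)=1-2p-\epsilon'+\log_2\!\left(1-\tfrac{\epsilon'}{1-p}\right)\!\bigl((1-p)-\epsilon'\bigr)+\log_2\!\left(\tfrac{2\epsilon'}{1-p}\right)\epsilon',
\]
and verifies numerically that $g(p)<0$ precisely for $0.33<p<1/2$. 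This is the \emph{only} place the hypothesis $p>0.33$ enters, and you never invoke it.

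Relatedly, your choice ``sufficiently small $\epsilon'$'' is fatal here: as $\epsilon'\to 0$ the suppression $(1-(1-2q)^s)^{\epsilon' n}$ disappears entirely, so the intermediate-$s$ bound cannot close. The paper must take $\epsilon'=h^{-1}(p)$ specifically, large enough to make the $I_1$ estimate work yet still satisfying the hypotheses of \cref{lem:dist_erased}. (A smaller side issue: your bound $\mathbb{P}[v_S=0]\le n^{-ws(1-p)}$ is false once $s\gtrsim n/\log n$, where the true value plateaus at $2^{-(1-p)n}$; and the arithmetic in your final combining step --- ``$wp\ge 2$ gives $n^{1-w(1-p)}\le 1/n^{w-1}$'' --- does not check out.)
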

\begin{proof}
Let $K$ denote the set of erased bits.  By \cref{cor:chernoff}, we can assume that $|K|=p'n \in [(p-\delta')n, (p+\delta')n]$ with probability exponentially close to $1$ for any $\delta' >0$.  Since the erased bits are independent of the code, we can assume without loss of generality that $K$ consists of the first $p' n$ bits and randomly sample our code.  Let us define the constant $\delta$ such that $R+\delta=1-2 p $.  The matrix $F$ is as defined in the previous lemma:
\be 
F= \,\,\,{\tiny {(R+p')n}} \Bigg \updownarrow \overset{ \xleftrightarrow{ \,\,\,\,(1-p') n \,\,\,\, } }{ \begin{bmatrix}
\,& C_{V\setminus K}& \, \\
\, & A_{cut}^T & \,
\end{bmatrix}}
\ee
where $C_{V\setminus K}$ are the generators of the code $C$ restricted to the non-erased bits, and $A_{cut}$ is the $|V\setminus K|\times|K|$ cut matrix across the cut $(K, V\setminus K)$.  The matrix $F$ fails to to full rank if and only if one the the following events occurs:
\begin{enumerate}
\item $A_1$- The rows of $C_{V\setminus K}$ are linearly dependent
\item $A_2$- The rows of $A_{cut}^T$ are linearly dependent
\item $A_3$- A linear combination of the rows of $A_{cut}^T$ produce some word in $C_{V\setminus K}$
\end{enumerate} 
We will produce upper bounds on the probabilities of each of these events, and use the union bound to find an upper bound on the probability that $F$ is not full rank.

For $A_1$, we will argue using the randomly generated parity check matrix of the classical code.  Recall that the parity check matrix $H$ is a $[(1-R)n] \times n$ matrix such that each entry is $1$ with probability $q$, and $0$ otherwise.  Denote the codewords $c\in C$ as $c=(a, b)$ where $a$ is supported on the erased bits and $b$ is supported on the non-erased bits.  Further, let $\{(a_i, b_i)\}$ be a minimal set of generators for the code.  Now observe the following equivalence: The set of vectors $\{b_i\}$ is linearly dependent if and only if there is some nonzero $c=(a, b) \in C$ with b=0.
Observe that, if $H$ is full rank on the first $p'n$ bits, then there is no non-zero $c \in C$ such that $c=(a, b)$ with $b=0$.  Hence, the probability $\mathbb{P}(A_1)$ is less than or equal to the probability that the first $p'n$ columns of $H$ are linearly dependent.

The first $p'n$ columns of $H$ correspond to a random $[(1-R)n]\times p' n=(2p +\delta)n \times p' n$ matrix where each entry is $1$ with probability $q$.  Note that, conditioned on the erased bits, we can treat this submatrix as independently sampled as in \cref{thm:rand_mat}.  If $\delta'$ is small compared to $p$, then this submatrix has more rows than columns by some constant fraction of $n$.  Hence, we derive the following upper bound using \cref{thm:rand_mat}
\be 
\mathbb{P}(A_1) =O\left( \frac{1}{n^{w-1}} \right)
\ee

We can bound $\mathbb{P}(A_2)$ directly using \cref{thm:rand_mat}.  $A_{cut}^T$ is a randomly sampled $p'n \times (1-p')n$ binary matrix where each entry is $1$ independently with probability $\frac{w\log(n)}{n}$.  Since $p<\frac{1}{2}$, there are more rows of $A_{cut}^T$ than there are columns by some constant fraction of $n$ if we take $\delta'$ small enough.  Hence:
\be 
\mathbb{P}(A_2) =O\left(\frac{1}{n^{w-1}}\right)
\ee

Bounding $\mathbb{P}(A_3)$ requires the technical result proven in the appendix.  Let $B_1$ be the event `$d_{C_{V\setminus K}} \leq \epsilon' n$' where $\epsilon' =H^{-1}(p)$.  We write:
\be 
\mathbb{P}(A_3)= \mathbb{P}(A_3\cap B_1) +\mathbb{P}(A_3 \cap \neg B_1)
\ee
Where we have used the negation symbol $\neg$ to indicate the complement.  It is easy to check that we have met the conditions required for \cref{lem:dist_erased} for small enough $\delta'$:
\be 
(1-p')H\left(\frac{H^{-1} (p)}{1-p'}\right) < 2p+\delta-p-\delta'
\ee
and
\be 
H(\epsilon') =p < 2 p+\delta
\ee
So, we can upper bound:
\be 
\mathbb{P}(A_3\cap B_1) \leq \mathbb{P}(B_1) =O\left( \frac{1}{n^{(2p+\delta)w-2}} \right)=O\left( \frac{1}{n^{2(p w-1)}}\right)
\ee
Now we need to find an upper bound on $\mathbb{P}(A_3 \cap \neg B_1)$.  Let $c_{V\setminus K}$ be some word in $C_{V\setminus K}$ of weight $g$, and let $v $ be any word in $\mathbb{F}_2^{p' n}$ of weight $k$.  By \cref{cor:rand_sum_vectors}, we write:
\be 
\frac{b(k)}{2^{(1-p')n}} := \mathbb{P}\left(A_{cut}v =c_{V\setminus K} \right)=\frac{1}{2^{(1-p')n}}\left(1+\left(1-\frac{2 w \log(n)}{n} \right)^k \right)^{(1-p')n-g}\left(1-\left(1-\frac{2 w \log(n)}{n} \right)^k \right)^g
\ee

It is sufficient for an upper bound to analyze the word $c_{V\setminus K}$ of smallest weight, since clearly this expression is decreasing with $g$ increasing.  Since $d_{C_{V\setminus K}} > \epsilon' n$ we can assume $g=\epsilon' n$.  Let us define the intervals $I_1=\left[1, \frac{z n}{\log(n)}\right]$ and $I_2=\left[\frac{z n }{\log(n)}, p' n\right]$ for some large constant $z$ to be determined.  For fixed $g$, we will first provide an upper bound for this function inside the interval $I_1$.  Let us analyze the derivative of $b(k)$ with respect to $k$:

\begin{align}
b'(k)=\left[1+\left(1-\frac{2 w \log(n)}{n} \right)^k \right]^{(1-p')n-g-1} \left[1+\left(1-\frac{2 w \log(n)}{n} \right)^k \right]^{g-1}\log\left(1-\frac{2 w \log(n)}{n} \right) \times\\
\nonumber \left\{ - \left(1+\left(1-\frac{2 w \log(n)}{n} \right)^k \right)g+ ((1-p')n-g)\left(1-\left(1-\frac{2 w \log(n)}{n} \right)^k \right)\right\}
\end{align}
We can set this expression equal to zero and solve.  We obtain:
\be 
k_{max}^1=\frac{\log\left(\frac{(1-p')n-2g}{(1-p')n} \right)}{\log\left(1-\frac{2 w \log(n)}{n} \right)}=\frac{\log\left(1-\frac{2 \epsilon '}{1-p'}\right)}{\log\left(1-\frac{2 w \log(n)}{n} \right)}
\ee
The function $b'(k)$ is peaked around $k_{max}^1$, or $b'(k) \geq 0 $ for $k \leq k_{max}^1$ and $b'(k)\leq 0$ for $k \geq k_{max}^1$.  Expanding the expression for $k_{max}^1$, we can see that $k_{max}^1 \in I_1$, so $b(k_{max}^1)$ provides an upper bound for $b(k)$ in this interval.  We calculate:
\begin{align}
\frac{b(k_{max}^1)}{2^{(1-p')n}} =\frac{1}{2^{(1-p')n}} \left[2-2\frac{\epsilon'}{1-p'}\right]^{(1-p')n-\epsilon' n} \left[\frac{2 \epsilon '}{1-p'}\right]^{\epsilon ' n}\\
\nonumber = 2^{\wedge} \left[-\epsilon' n +\log_2\left(1-\frac{\epsilon'}{1-p'}\right) ((1-p')n-\epsilon' n)+\log_2\left(\frac{2\epsilon '}{1-p'}\right)\epsilon ' n\right]
\end{align}

In the second interval $I_2$, the best upper bound we can obtain is:
\be 
\frac{b\left(\frac{z n}{\log(n)}\right)}{2^{(1-p')n}} \leq \frac{1}{2^{(1-p')n}}\left[1+\left(1-\frac{2 w \log(n)}{n} \right)^{\frac{z n}{\log(n)}}\right]^{(1-p')n}
\ee
For large enough $n$, 
\be 
\leq \frac{1}{2^{(1-p')n}}\left[1+e^{-2 w z}\right]^{(1-p')n}=:\frac{b(k_{max}^2)}{2^{(1-p')n}}
\ee

We are interested in finding an upper bound for the probability that any $v\in \mathbb{F}_2^{p' n}$ maps to any $c_{V\setminus K} \in C_{V\setminus K}$ under $A_{cut}$.  For this we can employ the union bound:
\be 
\mathbb{P}(\exists v \in \mathbb{F}_2^{p' n}, \exists c_{V\setminus K} \in C_{V\setminus K}: A_{cut} v=c_{V\setminus K} \cap \neg B_1) \leq \sum_{\substack{v\in \mathbb{F}_2^{p' n}\\ c_{V\setminus K} \in C_{V\setminus K}}} \mathbb{P}[A_{cut} v=c_{V\setminus K}\cap \neg B_1]
\ee
Under our assumptions, $c_{V\setminus K} > \epsilon' n$, we can further upper bound this expression by:
\begin{align} 
\leq 2^{R n}\left(\sum_{|v| \in I_1} \frac{b(k_{max}^1)}{2^{(1-p')n}}+\sum_{|v|\in I_2}\frac{b(k_{max}^2)}{2^{(1-p')n}}\right)
\end{align}
where we used the fact that the code $C_{V\setminus K}$ has at most $2^{Rn}$ many words.  We then note that there are at most $2^{o(n)}$ many terms in the first sum.  The upper bound we obtain is:
\be 
\frac{2^{Rn}2^{o(n)} b(k_{max}^1)}{2^{(1-p')n}}+\frac{2^{Rn}2^{p'n}}{2^{(1-p')n}}\left[1+e^{-2 w z}\right]^{(1-p')n}
\ee
For large enough $z$ and small enough $\delta'$ the second term is exponentially small with $n$.  The first term is exponentially small if:
\begin{equation}\label{eq:67} 
R-\epsilon'+\log_2\left(1-\frac{\epsilon'}{1-p'}\right)\left((1-p')-\epsilon' \right)+\log_2\left(\frac{2\epsilon'}{1-p'}\right)\epsilon' < 0
\end{equation}
If 
\be 
g(p) := 1- 2p -\epsilon'+\log_2\left(1-\frac{\epsilon'}{1-p}\right)\left((1-p)-\epsilon' \right)+\log_2\left(\frac{2\epsilon'}{1-p}\right)\epsilon' < 0
\ee
Then we can make $\delta'$ small enough that \cref{eq:67} holds.  We have found computationally $g(p) < 0$ for $0.33 <p < \frac{1}{2}$ (recall that we set $\varepsilon'=h^{-1}(p)$).
\end{proof}

\begin{corollary}[Quantum Erasure Channel]
For any $0.33 < p< \frac{1}{2}$, let $R < 1-2 p$.  Let $q=\frac{w\log(n)}{n}$ for some constant $w>1$.  Let $H$ be a randomly sampled $(1-R)n \times n$ matrix where all $H_{ij}$ are distributed according to i.i.d. $\text{Bern}(q)$ random variables.  Denote the code with parity check matrix $H$ as $C$.  Let $G$ be a randomly sampled graph where we begin with an empty graph and add each edge independently with probability $q$.  

The quantum code $(G, C)$ has vanishing probability of making a decoding error when sent over the quantum erasure channel $\mathcal{E}_p$.  In particular, the probability of error satisfies:
\be 
p_e =O\left( \frac{1}{n^{w-1}}+\frac{1}{n^{2(pw-1)}}\right)
\ee
\end{corollary}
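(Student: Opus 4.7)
The corollary is essentially a packaging of the previous theorem together with the recovery lemma from Section~5.2, so my plan is to stitch these two together while being careful about how the physical erasure channel relates to the combinatorial event that the matrix $F$ is full rank.

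First I would reduce the quantum channel to a classical event about a random bit pattern. Under $\mathcal{E}_p^{\otimes n}$, the set $K$ of erased bits is a random subset where each bit is included independently with probability $p$. Conditioning on $K$, the restriction of $\mathcal{E}_p^{\otimes n}$ to those bits outputs the flag state $\ket{e}$, which, by the Principle of Implicit Measurement (Theorem on implicit measurement), we may treat as a computational basis measurement of the erased bits whose outcome is hidden from us. Applying Lemma~\ref{lem: g_state} to each measurement outcome $y$, the post-channel state on $V\setminus K$ becomes a mixture over $y$ of labeled graph states of the form $Z_{c_{V\setminus K}+A_{cut}y}\ket{G'}$, which is precisely the situation analyzed in Section~\ref{sec:recovery}.

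Next, I would invoke the recovery lemma from Section~\ref{sec:recovery}, which states that whenever the matrix $F$ built from $C_{V\setminus K}$ and $A_{cut}^T$ is full rank over $\mathbb{F}_2$, there exists a quantum operation $\mathcal{R}$ that exactly recovers the encoded state. So the decoder succeeds on any input $\ket{\psi}\in (G,C)$ whenever the random event ``$F$ is full rank'' occurs; otherwise we charge the loss to the error probability. Note that the construction of $F$ depends only on $G$, $C$, and the erasure pattern $K$, not on $\ket{\psi}$, so the bound we obtain is uniform over inputs.

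To finish, I would apply the previous theorem directly: under the specified sampling model for $G$, $C$, and $K$ (independent $\text{Bern}(p)$ erasures, independent $\text{Bern}(q)$ edges, and independent $\text{Bern}(q)$ entries of $H$), the probability that $F$ is not full rank is at most
\begin{equation}
O\!\left(\frac{1}{n^{w-1}}+\frac{1}{n^{2(pw-1)}}\right).
\end{equation}
Since this is exactly the claimed bound on $p_e$, combining with the previous paragraph completes the proof. The only nonroutine point to be careful about is that the previous theorem was stated for a random erasure pattern $K$ with $|K|=p'n$ tightly concentrated around $pn$ by Corollary~\ref{cor:chernoff}; since the Chernoff tail is exponentially small, it is absorbed into the stated polynomial bound without affecting the final rate. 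I do not expect any additional obstacle beyond this bookkeeping.
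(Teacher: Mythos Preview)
Your proposal is correct and matches the paper's intent: the corollary is stated immediately after the theorem with no separate proof, being an immediate consequence of the recovery lemma (full-rank $F$ implies exact recovery) combined with the theorem's bound on $\mathbb{P}(F\text{ not full rank})$. Your write-up is in fact more explicit than the paper about the implicit-measurement reduction and the Chernoff bookkeeping, but the logical skeleton is identical.
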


\subsection{Size Bounds on the code $(G, C)$}
Now to provide a probabilistic estimate on the weight of the randomly chosen stabilizer code.  For $i \in [1, \hdots{} (1-R)n]$, define the random variable $X_i$ to be the weight of the $i$th row of the parity check matrix $H$.  By \cref{lem:chernoff}, for each $i$
\begin{align}
\mathbb{P}(X_i \geq \log^{1+\zeta}(n)) \leq \frac{\mathbb{E}(e^{t \cdot{} X_i })}{e^{t \cdot{} a}}=\\
\nonumber \frac{ e^{n\left(\frac{w \log(n)}{n} \right)(e^t-1)} }{e^{t \log^{1+\zeta}(n)}}=\frac{n^{w(e^t-1) } }{n^{t \log^\zeta(n)}}
\end{align}
So, we can calculate via the union bound:
\begin{align}
\mathbb{P}\left( \cup_{i} \{X_i \geq \log^{1+\zeta}(n)\} \right)\leq \frac{n e^{n\left(\frac{w \log(n)}{n} \right)(e^t-1)} }{e^{t \log^{1+\zeta}(n)}}=\\
\nonumber \frac{n^{w(e^t-1)+1 } }{n^{t \log^\zeta(n)}}
\end{align}
Now for $i \in [1, 2, \hdots{} n]$ define the random variable $Y_i$ to be the number of neighbors of a vertex $i$ in the randomly generated graph $G$.  The same analysis yields:
\be 
\mathbb{P}\left( \cup_{i} \{Y_i \geq \log^{1+\zeta}(n)\} \right)\leq \frac{n^{w(e^t-1)+1 } }{n^{t \log^\zeta(n)}}
\ee
If all $X_i$ and $Y_i$ are less than $\log^{1+\zeta}(n)$, then the maximum weight of a generator of the stabilizer quantum code is upper bounded by $\log^{2+2\zeta}(n) $.  

\be
\mathbb{P} (\text{code $(G, C)$ is a $[[n, k, d, j]]$ code} \text{ with $j \geq \log^{2+2\zeta}(n)$}) \leq \frac{2 n^{w(e^t-1)+1 } }{n^{t \log^\zeta(n)}}
\ee
If we take $t=O(1)$ we obtain vanishing probability with $n$ for any $\zeta > 0$.

\section{Conclusion}

In this paper we have given random constructions of quantum stabilizer codes that both achieve (come arbitrarily close to) the capacity of the erasure channel, while at the same time are $\text{poly}\log$-LDPC with high probability.  We stress that these codes are interesting primarily due to the work of Delfosse et al. \cite{DZ13}, since such a property is impossible with codes that are LDPC.  

We speculate that the graph states presented have interesting entanglement properties: in a sense they are ``almost'' absolutely maximally entangled states \cite{He14}.  An absolutely maximally entangled state is a state that is maximally entangled across any partition of the subsystems.  The graph states we have described have the following property:  Choose any partition of the subsystems $(A, B)$ such that $A=\alpha n$ and $B=(1-\alpha) n$ where $\alpha<\frac{1}{2}$.  The quantum graph state described is maximally entangled across the partition with all but inverse polynomial probability in $n$.  This is easily seen from \cref{lem: g_state}, and \cref{thm:rand_mat}\footnote{Note that if we add an edge with probability $1/2$, then we could have instead used a much simpler version of \cref{thm:rand_mat} to prove this.  We required the more complicated result of Kolchin because we needed $log$-LDPC.}.  Hence, the quantum state is maximally entangled across almost all partitions where one set in the partition is larger than the other set by some constant fraction of the qubits.  Indeed, we first became interested in these states because of this property.  

\section{Acknowledgments}
The authors wish to thank Lior Eldar and Murphy Y. Niu for helpful comments on the draft.  

\bibliographystyle{plain}
\bibliography{References}

\appendix
\section{Appendix}
Before starting the proofs we present a few simple definitions and facts.  The Gamma function is defined as:
\be 
\Gamma(y):=\int_{0}^\infty x^{y-1} e^{-x} dx
\ee
The Digamma functions is defined as:
\be 
\psi^{(0)}(y):=\frac{\Gamma'(y)}{\Gamma(y)}
\ee
and the $m$th order Polygamma function is defined as:
\be 
\psi^{(m)}(y):=\frac{d^m}{dy^m}\psi^{(0)}(y)
\ee
The Harmonic numbers are defined as:
\be 
H_{x-1}:=\psi^{(0)}(x)+\gamma\,\,\,\,\,\,\,\,\,\,\,\,\,\,\, H_{x-1}^{(2)}:=\frac{\pi^2}{6}-\psi^{(1)}(x)
\ee
where $\gamma$ is the Euler Mascheroni constant.  At integer values, the Harmonic numbers have their usual expression:
\be 
H_k=\sum_{j=1}^k \frac{1}{j}\,\,\,\,\,\,\,\,\,\,\,\,\, H_k^{(2)}=\sum_{j=1}^{k} \frac{1}{j^2}
\ee
There are many important properties of these definitions, we list the properties we will need for the proofs:
\begin{fact}\label{lem:list}
\begin{enumerate}
\item  The Gamma function is equal to the factorial at positive integer arguments:
\be 
\forall j \in \mathbb{Z}, \,\, > 0:  \,\, \Gamma(j-1)=j!
\ee

\item  We can approximate $\psi^{(1)}(k)$ as $\Theta(1/k)$:
\be 
\forall k > 0:\,\,\,\, \frac{1}{k} \leq \psi^{(1)}(k) 
\leq \frac{2}{k}
\ee
which implies by definition:
\be 
\frac{\pi^2}{6}-\frac{2}{k} \leq H^{(2)}_{k-1}  \leq \frac{\pi^2}{6}-\frac{1}{k}
\ee
\item  For positive $k$:
\be 
\gamma +\log(k) < H_k < \gamma +\log(k+1)
\ee

\end{enumerate}
\end{fact}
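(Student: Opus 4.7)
The plan is to prove the three items independently, since each concerns a different classical special function and the arguments are essentially decoupled; these are all standard textbook facts, so the task is really to point to the right integral or series representation in each case.

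For item 1, assuming the intended identity is $\Gamma(j+1) = j!$ for positive integers $j$ (the printed form $\Gamma(j-1) = j!$ is a typo, as $\Gamma(1) = 1 \neq 2 = 2!$), integration by parts on the defining integral gives the functional equation $\Gamma(y+1) = y\Gamma(y)$: the boundary term vanishes and the remaining integral is $y \int_0^\infty x^{y-1} e^{-x}\,dx$. Combined with the base case $\Gamma(1) = \int_0^\infty e^{-x}\,dx = 1$, induction on $j$ finishes.

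For item 2, I would begin with the series representation $\psi^{(1)}(k) = \sum_{n=0}^\infty (n+k)^{-2}$, obtained by termwise differentiation of the standard partial-fraction expansion $\psi^{(0)}(k) = -\gamma + \sum_{n=0}^\infty \left(\frac{1}{n+1} - \frac{1}{n+k}\right)$. Reindexing by $m = n + k$ yields $\psi^{(1)}(k) = \sum_{m=k}^\infty m^{-2}$. The lower bound then follows from the monotone integral comparison $\sum_{m \geq k} m^{-2} \geq \int_k^\infty x^{-2}\,dx = 1/k$, while for the upper bound, splitting off the leading term gives $\sum_{m \geq k} m^{-2} \leq k^{-2} + \int_k^\infty x^{-2}\,dx = k^{-2} + k^{-1} \leq 2/k$ for $k \geq 1$. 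The consequent bounds on $H^{(2)}_{k-1}$ are immediate from the defining identity $H^{(2)}_{k-1} = \pi^2/6 - \psi^{(1)}(k)$ together with $\sum_{m \geq 1} m^{-2} = \pi^2/6$.

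For item 3, the natural approach is to exploit the monotonicity of $a_k := H_k - \log k$ and $b_k := H_k - \log(k+1)$. Using $\log(1 + 1/k) > 1/(k+1)$ one shows $b_k$ is strictly increasing, and using $\log(1 + 1/k) < 1/k$ one shows $a_k$ is strictly decreasing; both of these elementary inequalities follow by comparing the function $1/x$ to its endpoint values on the interval $[k, k+1]$. Since $a_k, b_k \to \gamma$ by the definition of the Euler--Mascheroni constant, we conclude $b_k < \gamma < a_k$ for every finite $k \geq 1$, which rearranges to $\gamma + \log k < H_k < \gamma + \log(k+1)$. I do not anticipate any genuine obstacle here: these bounds are stated for later reference and the only mildly nontrivial ingredient is the series representation of $\psi^{(1)}$, which in turn is a routine consequence of the Weierstrass product for $\Gamma$.
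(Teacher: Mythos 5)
The paper offers no proof of this Fact at all --- it is asserted as a list of standard properties of the Gamma, polygamma, and harmonic functions --- so there is no "paper approach" to compare against; your job here is just to supply correct standard arguments, and you essentially do. Your item 1 correctly diagnoses the typo ($\Gamma(j+1)=j!$ is what is meant and what the integration-by-parts induction gives). Your item 2 argument via the series $\psi^{(1)}(k)=\sum_{n\ge 0}(n+k)^{-2}$ and integral comparison is the standard one; note, though, that the upper bound as printed ("for all $k>0$") is actually false --- $\psi^{(1)}(1/2)=\pi^2/2\approx 4.93 > 4 = 2/(1/2)$ --- and your proof correctly establishes it only for $k\ge 1$, which is the range in which the paper uses it (the arguments are $k+1$ and $n-k+1$ with $k\ge 1$), so your implicit restriction is a feature, not a gap. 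Your item 3 is the standard squeeze between the decreasing sequence $H_k-\log k$ and the increasing sequence $H_k-\log(k+1)$, both converging to $\gamma$; you have merely swapped which of the two elementary logarithm inequalities drives which monotonicity claim ($\log(1+1/k)<1/k$ gives that $H_k-\log(k+1)$ increases, and $\log(1+1/k)>1/(k+1)$ gives that $H_k-\log k$ decreases), a harmless slip. One caveat worth recording: the paper later evaluates $H_k$ and $H^{(2)}_k$ at non-integer arguments via the digamma extension given in the appendix, and your monotone-sequence proof of item 3 covers only integer $k$; the real-argument version follows from the standard bounds $\log x - 1/x < \psi^{(0)}(x) < \log x$, but that requires a separate (still routine) argument.
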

Throughout the appendix we will have some $n$ and $w$ in mind.  We will denote:
\be 
a:=1-\frac{2 w \log(n)}{n}
\ee
The goal of the appendix is to show that codes from our ensemble have linear distance after the erasure channel.  We find it useful to first prove that the code itself has linear distance with high probability:
\begin{lemma}\label{lem:almost_linear}
Let $H$ be a $\alpha n \times n$ binary matrix with $\alpha < 1$, where each entry is chosen uniformly at random according to i.i.d. $\text{Bern}(q)$ where $q=\frac{w \log(n)}{n}$. Let $C$ be the code with $H$ as its parity check matrix.  Further, let $d_C$ be the distance of the code $C$.  For any constant $\epsilon > 0$ satisfying $h(\epsilon) < \alpha$
\be 
d_C > \epsilon n
\ee
with probability at least:
\be 
1-O\left(\frac{1}{n^{w \alpha-2}}\right)
\ee
\end{lemma}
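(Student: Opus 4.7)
The approach is a Gilbert--Varshamov style first moment union bound over low-weight words, adapted to the sparse parity check ensemble. The event $d_C \leq \epsilon n$ is equivalent to the existence of some nonzero $c \in \mathbb{F}_2^n$ with $|c| \leq \epsilon n$ satisfying $Hc = 0$, so grouping by weight I get
\begin{equation*}
\Pr[d_C \leq \epsilon n] \;\leq\; \sum_{g=1}^{\lfloor \epsilon n\rfloor} \binom{n}{g}\; \Pr[Hc_g = 0],
\end{equation*}
where $c_g$ is an arbitrary fixed vector of weight $g$. Since the rows of $H$ are independent and each row dotted with $c_g$ is a sum of $g$ i.i.d.\ $\mathrm{Bern}(q)$ variables mod $2$, the Gallager-style lemma / corollary above gives $\Pr[\mathrm{row}\cdot c_g = 0] = (1+(1-2q)^g)/2 = (1+a^g)/2$, and hence $\Pr[Hc_g=0] = ((1+a^g)/2)^{\alpha n}$. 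The remaining work is to bound $\sum_g \binom{n}{g}((1+a^g)/2)^{\alpha n}$.

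I would split the range of $g$ into a polynomial regime $1 \leq g \leq z n/\log n$ for a small constant $z>0$, and an exponential regime $zn/\log n < g \leq \epsilon n$. In the polynomial regime, the elementary estimate $1 - a^g = 1-(1-2q)^g \geq 2qg(1-qg)$ gives
\begin{equation*}
\Bigl(\tfrac{1+a^g}{2}\Bigr)^{\alpha n} \;\leq\; \exp\bigl(-\tfrac{\alpha n}{2}(1-a^g)\bigr) \;\leq\; n^{-w\alpha g\,(1-o(1))},
\end{equation*}
and combined with $\binom{n}{g}\leq n^g/g!$ each summand is at most $n^{g(1-w\alpha)}/g!\cdot (1+o(1))$. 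Under the assumption $w\alpha > 2$ (the only regime in which the stated conclusion is meaningful), the series $\sum_g n^{g(1-w\alpha)}/g!$ is bounded by its leading $g=1$ term up to a constant, giving an $O(n^{1-w\alpha}) = O(1/n^{w\alpha-1})$ contribution that comfortably fits inside the claimed $O(1/n^{w\alpha-2})$. The Harmonic number and Polygamma inequalities in Fact~\ref{lem:list} can be invoked here if an explicit constant is needed.

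In the exponential regime, $a^g \leq (1-2w\log n/n)^{zn/\log n} \leq n^{-2wz + o(1)}$, so $(1+a^g)/2 \leq 1/2 + o(1)$. Combined with $\binom{n}{g}\leq 2^{n h(g/n)}$ and the hypothesis $h(\epsilon) < \alpha$, monotonicity of $h$ on $[0,1/2]$ gives each summand at most $2^{n(h(g/n)-\alpha+o(1))} \leq 2^{-cn}$ for some $c>0$. These exponentially small contributions are negligible compared to the polynomial bound from the first regime, and adding the two regimes yields the claimed bound.

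The main obstacle is the transition between the two regimes. Once $qg$ stops being small, the Taylor approximation $1-a^g \approx 2qg$ degrades, while the entropy bound $\binom{n}{g}\leq 2^{nh(g/n)}$ is still wasteful when $g = o(n)$. The delicate step is choosing the cutoff $zn/\log n$ and the constant $z$ so that the polynomial-regime estimate remains valid right up to the point where $a^g$ is already polynomially small; the Polygamma inequalities from Fact~\ref{lem:list} are the natural tool for controlling the sub-leading corrections uniformly across this crossover. The (modest) slack between $n^{1-w\alpha}$ and the stated $n^{2-w\alpha}$ is precisely what absorbs any such intermediate-range inefficiency, and is what makes the argument robust enough for the downstream application to Lemma~\ref{lem:dist_erased}.
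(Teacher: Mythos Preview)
Your first-moment setup and the expression $\sum_g \binom{n}{g}\bigl((1+a^g)/2\bigr)^{\alpha n}$ match the paper exactly, but your control of this sum has a real gap. You assert that for $g\geq zn/\log n$ one has $a^g\leq (1-2w\log n/n)^{zn/\log n}\leq n^{-2wz+o(1)}$; in fact $(1-2w\log n/n)^{zn/\log n}=e^{-2wz+o(1)}$, a constant. Hence $(1+a^g)/2\leq (1+e^{-2wz})/2$, not $\tfrac12+o(1)$, and your second-regime bound becomes $2^{\,n[h(\epsilon)-\alpha(1-\log_2(1+e^{-2wz}))]}$, which is exponentially small only when $z$ is large. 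But your first-regime estimate $1-a^g\geq 2qg(1-qg)$ needs $qg\leq wz<1$, hence $z$ small. For $h(\epsilon)/\alpha$ close to $1$ no single constant $z$ satisfies both constraints, so the two-regime split as written does not close; this is precisely the crossover you flag but do not resolve. (A clean repair is a three-way split: for $g\in(z_1 n/\log n,\;\delta_0 n]$ with $\delta_0$ a small constant, $\binom{n}{g}\leq 2^{nh(\delta_0)}$ is still beaten by $((1+e^{-2wz_1})/2)^{\alpha n}$, and the hypothesis $h(\epsilon)<\alpha$ is only needed once $g\geq\delta_0 n$, where $a^g=n^{-\Omega(1)}$ genuinely is $o(1)$.)

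The paper takes a different route that sidesteps the crossover entirely. It extends $f(k):=\binom{n}{k}(1+a^k)^{\alpha n}$ to real $k$ via Gamma functions, computes $f'$, and uses the Harmonic-number and Polygamma estimates in Fact~\ref{lem:list} to show that $f'<0$ on $[1,\,zn/\log n]$ and that $f'$ has exactly one zero on $[zn/\log n,\,\epsilon n]$. Thus $f$ has a single interior local minimum and is bounded on $[1,\epsilon n]$ by $\max\{f(1),f(\epsilon n)\}$; the sum is then at most $\epsilon n\cdot\max\{f(1),f(\epsilon n)\}/2^{\alpha n}=O(n^{2-w\alpha})$. So the Polygamma facts are used to control the sign of $f'$, not to patch sub-leading terms in a regime split.
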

\begin{proof}
We use first moment methods.  Let $X$ be a random variable equal to the number of subsets of columns of $H$ with size at most $\epsilon n$ that sum to zero.  $X$ can equivalently be defined as the number of words of $C$ with weight less than $\epsilon n$.  We calculate using \cref{cor:rand_sum_vectors}
\be 
\mathbb{E}(X)=\sum_{k=1}^{\epsilon n}\binom{n}{k}\left( \frac{1+a^k}{2}\right)^{\alpha n}
\ee

Let us define the function:
\be 
f(k):= \binom{n}{k}\left(1+a^k \right)^{\alpha n}
\ee
We can make this function continuous and differientiable by substituting Gamma functions for factorials:
\be 
\binom{n}{k}=\frac{\Gamma(n-1)}{\Gamma(k-1)\Gamma(n-k-1)}
\ee
We will show that we can find an upper bound for $f(k)$ by examining the endpoints $k=1$ and $k=\epsilon n$.  Define two intervals $I_1=\left[1, \frac{z n }{\log(n)}\right]$ and $I_2=\left[\frac{z n}{\log(n)}, \epsilon n\right]$ where $z$ is some large constant to be determined.  The property we claim follows if we can show that $f'(k) < 0$ in the interval $I_1$ and that the function $f'(k)$ has exactly one zero in the interval $I_2$.  The remainder of the proof will fall into two parts.  In part a we will demonstrate that $f'(k) < 0$ in $I_1$ and in part b we will demonstrate that $f'(k)$ has exactly one zero in $I_2$.

\begin{figure}
\center
  \includegraphics[width=300pt]{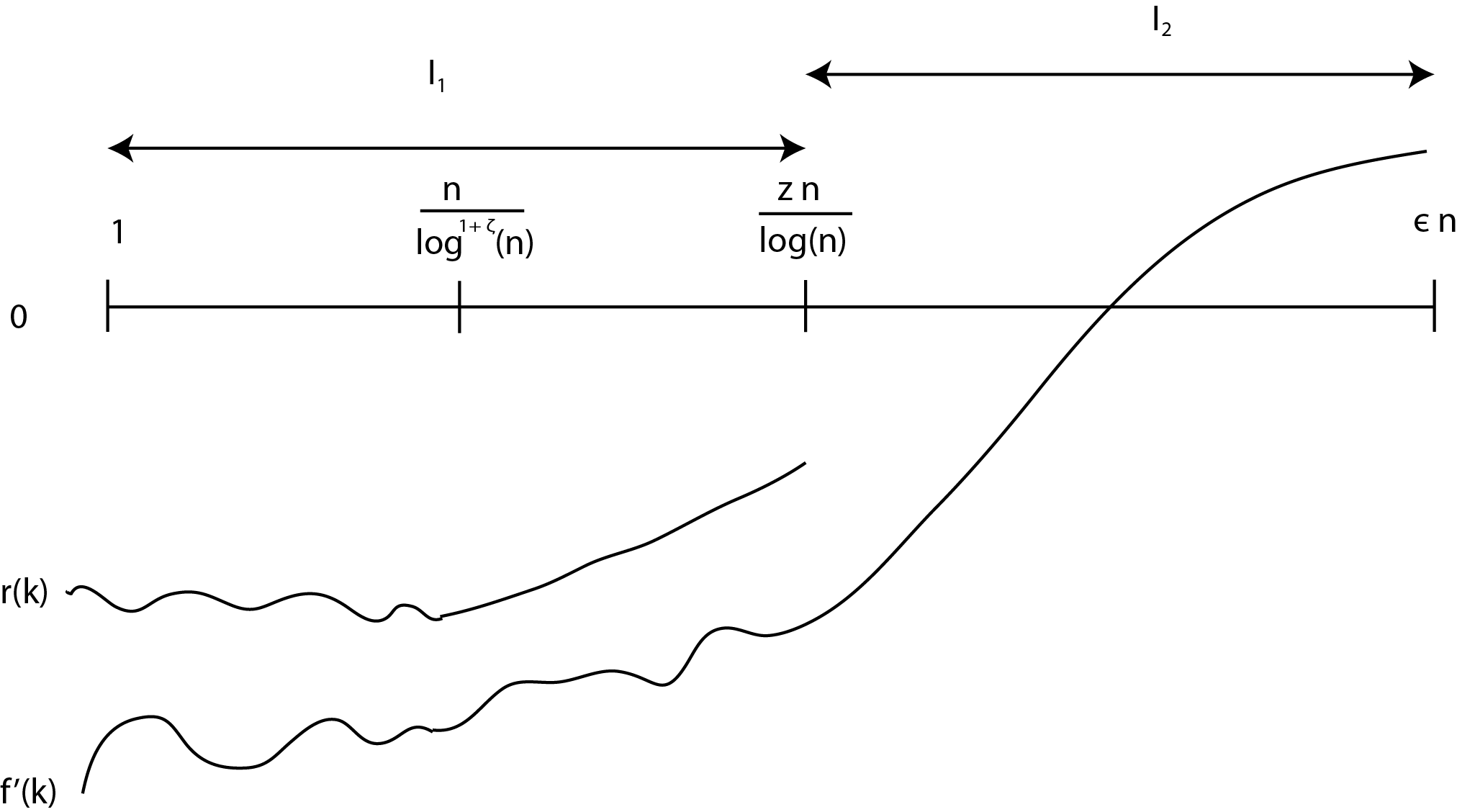}
  \caption{An illustration of our proof method. We demonstrate that $f'$ has the above form, which implies exactly one local minimum.  Therefore, $f$ must be maximized at one of the endpoints (either at $k=1$ or at $k=\epsilon n$.  }
  \label{fig:proof_method}
\end{figure}

\noindent \Large{\textbf{Part a}}

\noindent \normalsize We will further divide interval $I_1$ into two other intervals: $\left[1, \frac{n}{\log^{1+\zeta}(n)}\right]$ and $\left[\frac{n}{\log^{1+\zeta}(n)}, \frac{z n}{\log(n)}\right]$ where $\zeta$ is some small positive constant.  We will provide another function $r(k)$ which upper bounds $f'$ in both of these intervals.  We first demonstrate that $r(k) < 0$ in the interval $\left[1, \frac{n}{\log^{1+\zeta}(n)}\right]$.  Then we will show that $r'(k)$ has a positive slope in the interval $\left[\frac{n}{\log^{1+\zeta}(n)}, \frac{z n}{\log(n)}\right]$ and has a negative endpoint at $\frac{z n}{\log(n)}$, implying $f'(k) < 0$ throughout $I_1$.

We calculate:
\be 
f'(k)=v(k) \left(a^k \left(\alpha n \log
   \left(a\right)+H_{n-k}-H_k\right)+H_{n-k}-H_k\right)
\ee
where $v(k) > 0$
We are interested in the sign of $f'(k)$, so it is sufficient to study $\frac{f'(k)}{v(k)}$.  Define:
\be 
b(k):=a^k\left(\alpha n \log\left(a \right)+H_{n-k}-H_k \right)+H_{n-k}-H_k
\ee

For large enough $n$ we can expand to show:
\begin{equation}\label{eq:74}
\frac{-3 w \log(n)}{n}\leq\log\left(a \right) \leq  -\frac{2 w log(n)}{n}
\end{equation}
So,
\be 
b(k)\leq a^k\left(H_{n-k}-H_k -2 w \alpha  \log(n)\right)+H_{n-k}-H_k
\ee
By \cref{lem:list}:
\be 
H_{n-k} \leq \log(n) +\gamma
\ee
so we have $b(k) \leq r(k)$ where:
\be 
r(k):=a^k \left(-2 w \alpha +2\right) \log(n)+H_{n-k}-H_k
\ee
We need to show that the function $r(k) < 0 $ for all $k$ in the interval $\left[1, z\frac{n}{\log(n)}\right]$.  

For any $k \in \left[1, \frac{n}{\log^{1+\zeta}(n)} \right]$ the function $r(k)$ is negative for sufficiently large $n$.  Indeed for $k=\frac{n}{\log^{1+\zeta}(n)} $:
\be 
\lim_{n\rightarrow \infty} a^k=\lim_{n\rightarrow \infty} \left(1-\frac{2 w \log(n)}{n} \right)^{\frac{n}{\log^{1+\zeta}(n)}} =1
\ee
The Harmonic terms are of order $\log(n)$ at most, so the first term dominates if $w\alpha>2$.  

Now we will show that the term $r'(k) \geq 0$ for all $k$ in the interval $\left[ \frac{n}{\log^{1+\zeta}(n)}, z\frac{n}{\log(n)} \right]$.  We calculate:
\be 
r'(k)=a^k(-2 \alpha w+2)\log(n)\log \left(a \right) +H_{n-k}^{(2)}+H_k^{(2)}-\frac{\pi ^2}{3}
\ee
By \cref{eq:74}:
\be 
r'(k) \geq \frac{4 w^2\alpha \log^2(n)}{n}
   \left(a\right)^k+H_{n-k}^{(2)}+H_k^{(2)}-\frac{\pi ^2}{3}
\ee
Now we can apply \cref{lem:list}, and the fact that $\frac{1}{n-k} \leq \frac{1}{k}$ to obtain:
\be 
r'(k) \geq \frac{4 w^2\alpha \log^2(n)}{n}
   \left(1-\frac{2 w \log
   (n)}{n}\right)^k-\frac{4}{k}
\ee
By definition of our interval, we obtain:
\be 
k \geq \frac{n}{\log^{1+\zeta}(n)} \Rightarrow - \frac{1}{k}\geq - \frac{\log^{1+\zeta}(n)}{n}
\ee
For large enough $n$, we have: 
\be 
k \leq z\frac{n}{\log(n)}\Rightarrow \left(1-\frac{2 w \log(n)}{n}\right)^k \geq e^{-2 w z-1}
\ee
Hence,
\be 
r'(k) \geq\frac{4 w^2\alpha e^{-2 w z-1} \log^2(n)}{n} -\frac{4 \log^{1+\zeta}(n)}{n} > 0
\ee
Once we demonstrate that $r\left(\frac{zn}{\log(n)}\right) < 0$, we will have shown that $r(k) < 0$ for all $k\in \left[1, \frac{zn}{\log(n)}\right]$.  Indeed:
\be 
r\left(\frac{zn}{\log(n)}\right)=\left( a \right)^{\frac{zn}{\log(n)}} \left(-2w\alpha + 2\right) \log(n)+H_{n-\frac{zn}{\log(n)}}-H_{\frac{zn}{\log(n)}}
\ee
It is not hard to see that:
\be 
H_{n-\frac{zn}{\log(n)}}-H_{\frac{zn}{\log(n)}}=O(\log(\log(n)))
\ee
from \cref{lem:list}.  In addition, the term:
\be 
\left(1-\frac{2 w \log(n)}{n} \right)^{\frac{zn}{\log(n)}}=\Theta(1)
\ee
So, for large enough $n$, $r\left(\frac{zn}{\log(n)}\right) < 0$.  

We have shown that $r(k) < 0$ for all $k \in \left[1, \frac{z n}{\log(n)}\right]$.  This implies $b(k)< 0$ for all $k \in I_1$ which in turn implies that $f'(k) < 0$ for these $k$.  

\Large \textbf{Part b}

\normalsize \noindent Now we will show that inside the interval $k \in I_2$ the function $f'(k)$ has exactly one zero.  By rearranging, we can see that $f'(k)=0$ if and only if
\begin{equation}\label{eq:107} 
(H_{n-k}-H_k)\left(\frac{1}{a^k}+1\right)=-\alpha n \log\left(a\right)
\end{equation}
We define:
\be 
g(k) := (H_{n-k}-H_k )\left(\frac{1}{a^k}+1 \right)
\ee
and calculate:
\be 
g'(k)= \frac{1}{a^k} \left(\log(a) (H_{k}-H_{n-k})-(1+a^k) (\psi^{(1)}(n-k+1)+\psi ^{(1)}(k+1))\right)
\ee
So we can lower bound:
\begin{align} 
a^k g'(k) \geq \frac{2 w \log(n)}{n} \left(H_{n-k}-H_k \right) -\frac{4}{k}\\
\nonumber \geq \frac{2 w \log(n)}{n}(H_{n-k}-H_k)-\frac{4\log(n)}{z n} >0
\end{align}
which is positive for $z$ large enough since $H_{n-k}-H_k=\Omega(1)$ in this interval.  Since $g(k)$ is strictly increasing with $k$ and the RHS of \cref{eq:107} is fixed, there can be at most one place where $f'(k)=0$.  We have already shown
\be 
f'\left( \frac{z n }{\log(n)} \right) < 0
\ee
and it is not hard to see:
\be  
f'(\epsilon n) > 0
\ee
for large enough $n$ so $f'$ has exactly one zero in this interval.

Now back to the problem at hand, we want an upper bound on $\mathbb{E}(X)$.  Our analysis implies that we can use the largest endpoint as an upper bound on $f(k)$:
\begin{align} 
\mathbb{E}(X) \leq \max\left \{\epsilon n \binom{n}{1}\left( 1-\frac{2 w \log(n)}{n}\right)^{\alpha n}, \epsilon n \binom{n}{\epsilon n} \left(\frac{1+a^k}{2}\right)^{\alpha n} \right \}\\
\nonumber \leq \max \left \{ \frac{1}{n^{w\alpha -2}}, 2^{o(n)+(h(\epsilon)-\alpha) n } \right\}
\end{align}
By hypothesis the second term is exponentially small.  To complete the proof we use Markov's inequality.  Let us define the event $A_1$ as the event `$d_{C} \leq \epsilon n$'.  Then, we have:
\be 
\mathbb{P}(A_1) =\mathbb{P}(X\geq 1) \leq \mathbb{E}(X) =O\left(\frac{1}{n^{w \alpha-2}} \right)
\ee
\end{proof}

The result we are actually interested involves the code $C$ after the erasure channel.  We need to show that this code still has linear distance.  Suppose as in the paper that the set of bits $K$ has been erased, and say that the size of this set is $\beta n$ for some constant $\beta$.  We establish the following lemma:

\disterased*

\begin{proof}
We will use the previous lemma to show that we can expect the code $C$ to have linear distance, and condition on this event to show that the code $C_{V\setminus K}$ satisfies the same property with a weaker constant.  

Again let $A_1$ be the event that $d_C \leq \epsilon n$ where $\epsilon > \epsilon'$ and $H(\epsilon) < \alpha$.  Further, let $A_2$ be the event that $d_{C_{V\setminus K}} \leq \epsilon ' n $.  Just as in the paper, we will use the negation symbol $\neg$ for the complement.  So $\neg A_1$ is the event that $d_C > \epsilon n$.  We can write:
\be 
\mathbb{P}(A_2)=\mathbb{P}(A_2 \cap A_1) +\mathbb{P}(A_2 \cap \neg A_1)
\ee
By hypothesis $H(\epsilon) <\alpha$, so by \cref{lem:almost_linear}, we can upper bound:
\be 
\mathbb{P}(A_2 \cap A_1) \leq\mathbb{P}(A_1) =O\left(\frac{1}{n^{w\alpha-2}}\right)
\ee

Recall from the previous lemma that the code $C$ is defined through the parity check matrix $H$ as the set of all subsets of the columns of $H$ which sum to zero.  A `bad' event in the current context is the existence of a set of columns which simultaneously sum to zero and has small weight outside the set $K$.  Such an event implies the existence of a codeword which is ``nearly covered up'' by the erasure.  We proceed by bounding the probability that such a set exists.  

Let $s \subset [n]$ be some subset of the columns containing fewer than $ \epsilon' n$ many columns outside the erased set $K$, and let $Q$ be the class of all sets with this property.  Let us define the event $B_s$ as `the sum of the columns in the set $s$ is zero' (or equivalently that the membership vector of the set $s$ forms a word in the code).  It is easy to see that:
\be 
A_2 \Rightarrow \bigcup_{s \in Q} B_s
\ee
so we have immediately that:
\be 
A_2 \cap \neg  A_1 \Rightarrow \left(\bigcup_{s \in Q} B_s \right)\cap \neg A_1
\ee
which implies the upper bound:
\be 
\mathbb{P}( A_2 \cap \neg  A_1) \leq \mathbb{P}\left(\left(\bigcup_{s \in Q} B_s \right)\cap \neg A_1\right)
\ee
Now we will compute an upper bound on $\mathbb{P}(\left(\cup_{s \in Q} B_s \right)\cap \neg A_1)$.  Let $s$ be some subset with $l_1$ many elements in $K$ and $l_2$ many elements in $V\setminus K$.

By \cref{cor:rand_sum_vectors}, the probability is the same as before:
\be 
\mathbb{P}(B_s)=\left(\frac{1+\left(1-\frac{2 w \log(n)}{n} \right)^{l_1+l_2}}{2} \right)^{\alpha n}
\ee
except that under our assumptions (namely that we are in the case $\neg A_1$) we have, 
\be 
1 \leq l_2 \leq \epsilon ' n
\ee
and 
\be 
(\epsilon-\epsilon') n \leq l_1
\ee
so we have:
\be 
\mathbb{P}\left(B_s\cap \neg A_1\right) \leq \frac{1}{2^{\alpha n}}\left(1+\left(1-\frac{2 w\log(n)}{n}  \right)^{(\epsilon-\epsilon') n}  \right)^{\alpha  n} \leq  \frac{1}{2^{\alpha n}}\left(1+\frac{1}{n^{2 w (\epsilon-\epsilon') }} \right)^{\alpha n}
\ee
Where we assumed $n$ was very large for the final inequality.  Using the union bound, we can then argue:
\begin{align}\label{eq:68}
\mathbb{P}\left(\left(\bigcup_{s \in Q}B_s\right)\cap \neg A_1 \right)\leq \binom{(1-\beta)n}{\epsilon' n} \epsilon' n \sum_{k=(\epsilon-\epsilon')n}^{\beta n}\binom{\beta n }{k} \frac{1}{2^{\alpha n}}\left(1+\frac{1}{n^{2 w (\epsilon-\epsilon')}} \right)^{\alpha n}\\
\nonumber \leq \frac{2^{o(n) +H\left(\frac{\epsilon'}{1-\beta}\right)(1-\beta)n}}{2^{(\alpha-\beta) n}}
\end{align}

In the last step we used the standard approximation to binomial coefficients:
\be 
\binom{m}{\delta m}=2^{o(m)}2^{H(\delta)m}
\ee
We obtain an  upper bound that is exponentially small with $n$ if 
\be 
H\left(\frac{\epsilon'}{1-\beta}\right)(1-\beta) < \alpha-\beta 
\ee

\end{proof}

\end{document}